\documentclass{article}

\usepackage{tikz}
\usepackage{pgfplots}
\pgfplotsset{compat=1.17}

\PassOptionsToPackage{numbers, compress}{natbib}

\usepackage{amsthm}

\usepackage{lipsum}

\usepackage[utf8]{inputenc}
\usepackage[T1]{fontenc}

\usepackage{booktabs}
\usepackage{longtable,tabularx}
\usepackage{microtype}
\usepackage{multicol}
\usepackage{multirow}
\usepackage{caption}
\usepackage{savesym}
\usepackage{algorithm}
\usepackage{subfigure}

\usepackage{thm-restate}

\savesymbol{Bbbk}
\usepackage{amsmath,amssymb,amsfonts}
\usepackage{url}
\usepackage{tikz}
\usepackage{mathrsfs}
\usepackage{nicefrac,bbm}
\usepackage{hyperref}
\hypersetup{
  colorlinks=true,
  linkcolor=green!30!black,
  linktoc=all,
  citecolor=blue,
  bookmarksnumbered=true,
  pdfproducer={},
  pdfinfo={CreationDate={2025},ModDate={2025}}, pdfauthor={Sander Borst and Max Springer},
  pdfkeywords={},
  pdfsubject={},
  pdfcreator={},
  pdflang={}
}
\usepackage{cleveref}
\usepackage{mathtools,pifont,enumitem}
\usepackage[framemethod=TikZ]{mdframed}
\usepackage{color,fancybox,graphicx,subfigure}

\savesymbol{st}
\usepackage{soul}
\restoresymbol{SOUL}{st}
\usepackage[normalem]{ulem}

\usepackage{xurl}
\usepackage{scalerel}
\usetikzlibrary{math}
\usepackage{bbding}
\usepackage{array}
\usepackage{dsfont}

\usepackage{empheq}
\usepackage{stackengine}

\newtheorem{theorem}{Theorem}[section]
\newtheorem{lemma}[theorem]{Lemma}
\newtheorem{definition}[theorem]{Definition}

\newtheorem{claim}[theorem]{Claim}

\newtheorem{proposition}[theorem]{Proposition}

\newtheorem{remark}[theorem]{Remark}

\crefname{section}{Section}{Sections}
\crefname{theorem}{Theorem}{Theorems}
\crefname{assumption}{Assumption}{Assumptions}
\crefname{lemma}{Lemma}{Lemmas}
\crefname{definition}{Definition}{Definitions}
\crefname{conjecture}{Conjecture}{Conjectures}
\crefname{corollary}{Corollary}{Corollaries}
\crefname{construction}{Construction}{Constructions}
\crefname{claim}{Claim}{Claims}
\crefname{observation}{Observation}{Observations}
\crefname{proposition}{Proposition}{Propositions}
\crefname{fact}{Fact}{Facts}
\crefname{question}{Question}{Questions}
\crefname{problem}{Problem}{Problems}
\crefname{remark}{Remark}{Remarks}
\crefname{example}{Example}{Examples}
\crefname{equation}{Equation}{Equations}
\crefname{appendix}{Appendix}{Appendices}
\crefname{algorithm}{Algorithm}{Algorithms}
\crefname{model}{Model}{Models}
\crefname{figure}{Figure}{Figures}

\AfterEndEnvironment{definition}{\noindent\ignorespaces}
\AfterEndEnvironment{assumption}{\noindent\ignorespaces}
\AfterEndEnvironment{lemma}{\noindent\ignorespaces}
\AfterEndEnvironment{theorem}{\noindent\ignorespaces}
\AfterEndEnvironment{proposition}{\noindent\ignorespaces}
\AfterEndEnvironment{fact}{\noindent\ignorespaces}
\AfterEndEnvironment{question}{\noindent\ignorespaces}
\usepackage{max-style}

\usepackage{color-edits}

\renewcommand{\epsilon}{\varepsilon}

\usepackage{xfrac}

\newif\iftproofshowtitle
\tproofshowtitlefalse
\newenvironment{tproof}[1][]{\iftproofshowtitle
    \if\relax\detokenize{#1}\relax
      \begin{proof}\else
      \begin{proof}[#1]\fi
  \else
    \begin{proof}\fi
}{\end{proof}}

\newcommand{\agents}{N}
\newcommand{\items}{M}
\newcommand{\class}[1]{N_{#1}}

\newcommand{\Exp}{\mathbb{E}}
\usepackage[most]{tcolorbox}

\tcbset{
  mygraybox/.style={
    colback=gray!20,
    colframe=gray!20,
    boxrule=0pt,
    arc=0pt,
    outer arc=0pt,
    boxsep=0pt,
    left=0pt,
    right=0pt,
    top=0pt,
    bottom=0pt,
    enhanced,
    breakable
  }
}

          \newcommand{\Nb}[1]{\mathcal{N}(#1)}   \newcommand{\degX}[1]{\deg_X\!\left(#1\right)}                  \newcommand{\Vstar}[2]{V^*_{#1}\!\left(#2\right)}          \newcommand{\prop}{\mathrm{prop}}
\newcommand{\usw}{\mathrm{usw}}
\newcommand{\EE}{\mathbb{E}}

\title{Beyond the Half-Approximation:\\Fair and Efficient Online Class Matching}

\author{Sander Borst\\
  Max Planck Institute for Informatics\\
  Saarbr\"ucken, Germany\\
  \texttt{sborst@mpi-inf.mpg.de}
  \and
  Max Springer\\
  Princeton University\\
  Princeton, NJ, USA\\
  \texttt{maxspringer@princeton.edu}
}

\date{}

\begin{document}

\maketitle
\begin{abstract}
    Online bipartite matching, where agents are known in advance but items arrive sequentially and must be irrevocably assigned, is fundamental to problems ranging from ride-sharing to online advertising. 
When agents belong to classes such as demographic groups or geographic regions, fairness demands equitable treatment across these groups. 
Recent work introduced class envy-freeness (CEF), a natural extension of the classical fair division notion: an algorithm is $\alpha$-CEF if each class receives value at least an $\alpha$ fraction of what it could extract from any other class's bundle. 
However, all known algorithms achieving constant-factor CEF guarantees attain utilitarian social welfare (total matching value) of at most $\frac{1}{2}$ times the optimum, far below the $1-\frac{1}{e} \approx 0.632$ achievable without fairness constraints.

We resolve the open question of whether fairness necessitates this efficiency loss, by introducing threshold-based algorithms parameterized by $\gamma \in [0,1]$ that equalize allocations across classes until threshold $\gamma$, then maximize efficiency. 
For divisible matching, this yields simultaneous $(1-e^{-\gamma})$-CEF and $(1 - \frac{e^{\gamma-1}}{\gamma+1})$-USW guarantees; for indivisible matching, $\frac{\gamma}{2}$-CEF with the same USW. 
Setting $\gamma > 0$ produces the first algorithms beating $\frac{1}{2}$-USW while maintaining constant CEF. 
We complement this with a novel upper bound construction, proving no non-wasteful $\alpha$-CEF algorithm can exceed $\frac{1 +\alpha - e^{\alpha-1}}{1+\alpha}$-USW and correcting prior bounds that were vacuous for $\alpha < 0.58$.
Our upper bound nearly matches our algorithms' performance, giving the first substantive characterization of the price of fairness in online class matching. \end{abstract}

\section{Introduction}
We study online bipartite matching under fairness constraints, a model motivated by real-time allocation problems such as ride-sharing and online advertising. In these settings, decisions must be made irrevocably as requests arrive: once a driver accepts a ride or an ad impression is shown, the allocation cannot be undone. Moreover, agents (drivers, advertisers) often belong to diverse demographic groups or geographic regions, and purely profit-maximizing algorithms may systematically disadvantage certain communities—for example, by allocating fewer rides to drivers from particular neighborhoods or fewer impressions to advertisers from underrepresented markets.

Formally, the online bipartite matching model considers a known set of agents and items that arrive sequentially in an adversarial order; upon each item's arrival, the algorithm must immediately and irrevocably match it to an agent.
Classical work in this area focuses solely on \emph{utilitarian social welfare} (USW), the total value of the matching. The celebrated RANKING algorithm of Karp, Vazirani, and Vazirani~\cite{karp1990optimal} achieves the optimal competitive ratio of $1-\frac{1}{e} \approx 0.632$ for this objective, and an equivalent guarantee holds for divisible matchings via water-filling~\cite{kalyanasundaram2000optimal}.

However, efficiency alone is insufficient when agents belong to protected or identifiable classes. 
Recent work by Hosseini et al.~\cite{hosseini_class_2024} introduced the \emph{online class matching} framework, where the $n$ agents are partitioned into $k$ known classes (representing demographic groups, geographic regions, or skill levels), and the algorithm must balance the two competing objectives of maximizing total value (USW) while ensuring equitable treatment across classes. 
They formalized \emph{class envy-freeness} (CEF) as the fairness criterion, wherein an algorithm is $\alpha$-CEF if each class receives value at least an $\alpha$ fraction of what it could extract from any other class's allocation. 
Intuitively, no class should strongly prefer another class's bundle of items. This essentially adds a non-convex constraint to the online matching problem, making it significantly more challenging than the classical setting.

\paragraph{Fairness-efficiency tension.} 
Hosseini et al.~\cite{hosseini_class_2024} further showed that non-wasteful algorithms (those that never leave an item unmatched when some agent desires it) can achieve $(1-\sfrac{1}{e})$-CEF for divisible matching and $\frac{1}{2}$-CEF for indivisible matching. 
Non-wastefulness is both natural (wasting capacity is undesirable) and structurally important.
Specifically, it directly implies a $\frac{1}{2}$-USW guarantee (Proposition~\ref{prop:nw-half-usw}). 
However, this creates a troubling gap: while the best possible USW without fairness constraints is $1-\sfrac{1}{e} \approx 0.632$, all known fair algorithms plateau at $\frac{1}{2}$ USW. 
This raises a fundamental question, posed explicitly as an open problem in~\cite{hosseini_class_2024}:

\begin{quote}
\emph{Can algorithms simultaneously achieve constant-factor CEF and USW strictly better than $\frac{1}{2}$?}
\end{quote}

Despite subsequent work establishing price of fairness bounds---upper limits on achievable USW given fairness constraints~\cite{hajiaghayi_fairness_2023}---no algorithm breaking the $\frac{1}{2}$ barrier was known. 
The central question remained: does fairness inherently force this severe efficiency loss?

\subsection{Our Contributions}
We resolve this open problem in the affirmative and provide the first parametric tradeoff between class envy-freeness and utilitarian welfare for online class matching.
Our main contributions are as follows.

\paragraph{1. Parametric algorithms achieving the full spectrum of tradeoffs.} 
We introduce threshold-based algorithms that smoothly interpolate between pure efficiency ($\gamma=0$) and strong fairness ($\gamma=1$) via a single parameter $\gamma \in [0,1]$.
Namely, the \textbf{Equal Filling Till Threshold} (EFTT) for divisible matching, and \textbf{Hybrid Ranking} for indivisible matching.
Both algorithms follow a simple two-phase design: until each class reaches threshold $\gamma$ saturation, distribute items equally across classes to enforce fairness; once all classes exceed $\gamma$, switch to efficiency-maximizing allocation (water-filling or RANKING-style). 
This clean separation enables simultaneous analysis of both objectives.

\begin{theorem}[Informal—Divisible Matching]\label{thm:eftt-informal}
    For every $\gamma \in [0,1]$, EFTT achieves $(1-e^{-\gamma})$-CEF and $\left(1-\frac{e^{\gamma-1}}{\gamma + 1}\right)$-USW.
\end{theorem}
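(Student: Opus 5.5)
We will analyze EFTT with a primal--dual (water-filling) charging argument for the welfare bound and a separate direct argument for CEF. First we fix the algorithm. Each arriving item $j$ is split continuously among its neighbours. While the item is available, it is routed to the class whose current fill is smallest, provided that class still has a neighbour of $j$ below level $\gamma$. Within that class the mass goes to the least-filled neighbour, exactly as in water-filling. Once every neighbouring agent of $j$ has level at least $\gamma$, the remainder of $j$ is handed to ordinary water-filling over all neighbours. Throughout, the algorithm is non-wasteful: an item is discarded only when all its neighbours are full.

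\textbf{CEF bound.} Fix classes $A$ and $B$, and let $X_B$ be $B$'s bundle. The value $A$ can extract from $X_B$ is at most the mass of items in $X_B$ that have neighbours in $A$. We split this mass by the moment each unit was assigned.

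Take a unit assigned to $B$ while some neighbour $a\in A$ had level $x_a<\gamma$. By the equalization rule, $B$ was preferred over $A$ only because $B$'s relevant level was no larger. We then run the standard water-filling differential argument restricted to levels in $[0,\gamma]$. Each item mass $dm$ that could have gone to $a$ but went elsewhere happens while $a$ is filling, and this forces $a$'s level to rise like $dx_a\ge (1-x_a)\,dm$ in the appropriate averaged sense. Integrating up to level $\gamma$ gives that $A$ receives at least $(1-e^{-\gamma})$ times the contested mass.

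Units assigned after all relevant $A$-neighbours exceed $\gamma$ are easier: each such neighbour already holds at least $\gamma \ge 1-e^{-\gamma}$. Charging the $B$-mass to these $A$ agents, each of capacity $1$, therefore also gives ratio $\ge 1-e^{-\gamma}$.

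\textbf{USW bound.} We set up the LP dual with variables $\alpha_i$ for agents and $\beta_j$ for items. Whenever $dx$ of item $j$ is assigned to agent $i$ at level $x_i$, we split the unit gain as $\alpha_i \mathrel{+}= g(x_i)\,dx$ and $\beta_j \mathrel{+}= (1-g(x_i))\,dx$. Here $g$ is a piecewise function: it is constant on $[0,\gamma)$, since the fairness phase may fill agents in an order that ignores levels across classes, and it equals $e^{x-1}$-type on $[\gamma,1]$.

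Dual feasibility for an edge $(i,j)$ in OPT then splits into two cases. If $i$ ends below $\gamma$, non-wastefulness plus the equalization rule say $j$ went to agents at level at most $\gamma$. If $i$ ends at level $\ell\ge\gamma$, the water-filling phase guarantees that $j$ went to agents at level $\le \ell$. Requiring $\int_0^\ell g + 1-g(\ell)\ge c$ in the worst case and optimizing $g$ should give
\[
c = 1-\frac{e^{\gamma-1}}{1+\gamma}.
\]

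\textbf{Main obstacle.} The hardest step is the first-phase case of the welfare analysis. During equalization an item can go to a higher-level agent in another class, so the usual monotonicity of water-filling fails. I would try to show that during the fairness phase every agent receiving mass has level $<\gamma$, so $1-g$ is at least a fixed constant on that phase. I would then tune the constant value of $g$ on $[0,\gamma)$ so that the two constraints, one from the fairness phase and one from the efficiency phase, balance exactly at the stated ratio.
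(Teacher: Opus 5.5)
Your welfare half is essentially the paper's argument. You use the same dual split, with $g\equiv e^{\gamma-1}/(1+\gamma)$ on $[0,\gamma]$ and $g(z)=e^{z-1}$ above. Your ``main obstacle'' dissolves because Phase~I only ever feeds agents below $\gamma$. So an item with a neighbour ending below $\gamma$ is fully assigned with item-dual exactly $1-g(\gamma)$, and equating this with $1+\gamma g(0)-e^{\gamma-1}$ gives the constant.

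The gap is in the CEF half, and it starts with the algorithm. EFTT does not route mass to the least-filled class; it splits each item \emph{equally} among all classes in $Z_\gamma$. That equal split is exactly what the CEF bound uses: for any item processed while class $A$ still has a neighbour of it below $\gamma$, class $B$ receives no more of that item than class $A$ does. A ``least-filled class'' rule loses this per-item comparison. For instance, if fill means agent level, Phase~I is plain water-filling. Then a single item liked by one class-$A$ agent and $m$ class-$B$ agents gives $V_A=1/(m+1)$ against $V_A^*(y_B)=m/(m+1)$.

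Even with the correct rule, your computation does not establish $1-e^{-\gamma}$. The inequality $dx_a\ge(1-x_a)\,dm$ is asserted rather than derived; under equal splitting the relevant per-item ratio is simply $1$. Moreover, splitting $B$'s mass by whether some $A$-neighbour was still below $\gamma$ charges both parts against the same class-$A$ mass accumulated below level $\gamma$.

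Done carefully, a single cut at $\gamma$ gives $V_A^*(y_B)\le\int_0^\gamma f+f(\gamma)$, where $f(z)$ is the number of class-$A$ agents with final level at least $z$. Combined with $V_A\ge\int_0^\gamma f\ge\gamma f(\gamma)$, this yields only $\gamma/(1+\gamma)$, which is strictly below $1-e^{-\gamma}$ for $\gamma\in(0,1]$.

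The missing idea is to cut at \emph{every} $\theta\in[0,\gamma]$ and to partition agents, not units, by final level. Every item liked by a class-$A$ agent that ends below $\theta$ was processed entirely in Phase~I with $A$ active, and $A$'s share of it went to agents then below $\theta$. Hence $V_A^*(y_B)\le\int_0^\theta f+f(\theta)$ for all $\theta\le\gamma$. Integrate these bounds against $e^{\theta-\gamma}\,d\theta$ over $[0,\gamma]$. The left side becomes $(1-e^{-\gamma})V_A^*(y_B)$, and after swapping the order of integration the right side becomes exactly $\int_0^\gamma f\le V_A$.
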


\begin{theorem}[Informal—Indivisible Matching]\label{thm:hybrid-informal}
    For every $\gamma \in [0,1]$, Hybrid Ranking achieves $\frac{\gamma}{2}$-CEF and $\left(1-\frac{e^{\gamma-1}}{\gamma + 1}\right)$-USW.
\end{theorem}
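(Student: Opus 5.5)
We prove the two guarantees of Theorem~\ref{thm:hybrid-informal} separately, following the two-phase structure of Hybrid Ranking. Phase~1 runs while some class is below the threshold, i.e., holds fewer than $\gamma$ times its attainable optimum; its items go to the neediest class, as in the equal-filling rule of the divisible algorithm. Phase~2 begins once every class has passed $\gamma$ and runs RANKING (random ranks over agents) on the remaining items. Throughout we will use that Hybrid Ranking is non-wasteful: it never leaves an item unmatched while some neighbour is free.

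\textbf{CEF.} Fix classes $i$ and $j$, and let $V^*_i(X_j)$ denote the largest value class $i$ can extract from $j$'s bundle. We split $X_j$ into the items $j$ received while $i$ was still below threshold and the items received afterwards.

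For the first part we adapt the argument of Hosseini et al.\ that gives $\tfrac12$-CEF. By non-wastefulness, whenever an item desired by a free agent of $i$ went to $j$, we can charge it either to an agent of $i$ that is already matched or to the equalizing rule. The equalizing rule only gives an item to $j$ when $j$ is at most as saturated as $i$, so this charging yields $u_i \ge \tfrac12 V^*_i(X_j^{(1)})$ for the Phase~1 part $X_j^{(1)}$.

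For the items $j$ receives after $i$ crossed the threshold, we use the threshold itself: $i$ then already holds at least $\gamma$ times its attainable optimum, and that optimum is at least $V^*_i(X_j)$. Hence $u_i \ge \gamma\, V^*_i(X_j)$ whenever $i$ has crossed.

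Combining the two regimes by a case analysis on which part of $X_j$ dominates gives $u_i \ge \tfrac{\gamma}{2} V^*_i(X_j)$. The factor $\tfrac12$ is the indivisible loss from the charging step, and $\gamma$ is the cap imposed by the threshold.

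\textbf{USW.} We use a primal--dual (randomized dual fitting) analysis in the style of Devanur--Jain--Kleinberg for RANKING. Each matched edge splits its unit gain between the agent (dual variable $\alpha_u$) and the item ($\beta_v$) via a function $g$ of the agent's rank $y \in [0,1]$.

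Phase~1 is handled by non-wastefulness alone, which guarantees that each matched edge covers half of any item's loss, as in Proposition~\ref{prop:nw-half-usw}. Phase~2 needs the usual RANKING gain-sharing, but with the rank space effectively truncated: the $\gamma$ fraction already filled in Phase~1 behaves as though ranks in $[0,\gamma]$ were pre-consumed.

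We then require approximate dual feasibility: for every edge $(u,v)$ in the optimum, $\mathbb{E}[\alpha_u + \beta_v] \ge c$. The usual marginal-rank argument, with the Phase~1 share set to $\tfrac12$, produces an integral constraint of the form $\int_0^{\theta} g + (1-g(\theta)) \ge c$. Splitting this integral at $\gamma$ and taking $g(y) = e^{y-1}$ on the Phase~2 range gives the optimal constant $c = 1 - \frac{e^{\gamma-1}}{\gamma+1}$.

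As a sanity check, this formula gives $1 - \tfrac1e$ at $\gamma = 0$, recovering RANKING, and $\tfrac12$ at $\gamma = 1$, recovering the non-wasteful bound. As an alternative route, if the paper already establishes the divisible EFTT guarantee, Phase~2 could be reduced to it through the standard RANKING-versus-water-filling equivalence in expectation.

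\textbf{Main obstacle.} The hardest step is the interaction at the phase boundary. Which phase an item arrives in depends on the random ranks, so the dual-fitting argument must survive a random switching time, and the ranking-based allocation in Phase~2 must not undo the fairness established in Phase~1. We would first try to show that the switch time is monotone in the ranks; if that holds, the Phase~2 marginal-rank argument goes through conditionally on the state reached at the end of Phase~1.
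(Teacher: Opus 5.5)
\textbf{Verdict: the proposal has a genuine gap.} It analyzes an algorithm that is not Hybrid Ranking, and the CEF argument does not survive the correction.

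In Hybrid Ranking the threshold acts on independent per-agent ranks. Each agent draws $\mu_a\sim U[0,1]$, and the agents with $\mu_a\le\gamma$ form $N'$. An arriving item goes to a \emph{uniformly random} class among those with an unmatched compatible agent in $N'$, specifically to that class's first such agent in a fixed order. Otherwise, RANKING-style, it goes to the unmatched compatible agent outside $N'$ of smallest rank. No class value is ever compared with an ``attainable optimum'' (which is not known online anyway). There is no neediest-class rule and no global switching time; the two rules interleave item by item.

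So neither of your CEF inequalities is available. The $\tfrac{1}{2}$-charging needs a rule that gives $j$ an item only when $j$ is no more saturated than $i$. The bound $u_i\ge\gamma V_i^*(X_j)$ needs an invariant the algorithm never enforces. Together they would give $\min(\tfrac{1}{2},\gamma)$-CEF, and that is false. Take $N_i=\{a\}$, a class $j$ of $m\to\infty$ agents, and a single item liked by everyone. Then $a$ receives the item with probability tending to $\gamma/2$, and class $j$ with probability tending to $1-\gamma/2$. The CEF ratio therefore tends to $\frac{\gamma}{2-\gamma}$, which is below $\min(\tfrac{1}{2},\gamma)$ for every $\gamma\in(0,\tfrac{2}{3})$.

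\textbf{The missing idea for CEF} is the paper's per-pair dual fitting by marking. Fix $i,j$ and realize the class choice for each item as a uniformly random order of the classes. Then build a cover for the LP bounding $V_i^*(Y_j(X))$ as follows.
\begin{itemize}
\item When class $i$ comes up, mark compatible class-$i$ agents in the fixed order until an unmatched agent of $N'$ has been marked, or all are marked.
\item When $j$ comes up while a compatible class-$i$ agent is still unmarked, mark the item.
\end{itemize}
Every edge from $Y_j(X)$ to $N_i$ is covered, and unmarked class-$i$ agents keep fresh uniform ranks. Let $z$ be the number of unmarked compatible class-$i$ agents, $R$ the set of other classes with an unmatched compatible agent in $N'$, and $p=1/(|R|+1)$. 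Then:
\begin{itemize}
\item class $i$ gains at least $p(1-(1-\gamma)^z)$;
\item agent marks cost at most $p(1-(1-\gamma)^z)/\gamma$;
\item the item mark costs at most $p$, by the symmetry of $i$ and $j$ in the random order.
\end{itemize}
Since $1-(1-\gamma)^z\ge\gamma$, the cost is at most $\tfrac{2}{\gamma}$ times the gain.

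\textbf{USW.} This part is much closer to the paper: the same randomized primal--dual and the same integral for $\theta>\gamma$. As written, though, it does not yield the constant. With a ``Phase~1 share'' of $\tfrac{1}{2}$, or with non-wastefulness alone, the case where the item's threshold rank $\theta$ is at most $\gamma$ only gives $\tfrac{1}{2}$. The paper uses a single $g$ on the rank space: constant $e^{\gamma-1}/(\gamma+1)$ on $[0,\gamma]$ and $e^{z-1}$ above. Write $\delta=1-e^{\gamma-1}/(\gamma+1)$. Then $\theta\le\gamma$ gives $y_o\ge 1-g(\gamma)=\delta$, and $\theta>\gamma$ gives $1-g(\theta)+\int_0^\theta g=\delta$. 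The constant on $[0,\gamma]$ is exactly what balances these two cases. Because the ``phase'' is a property of an agent's rank rather than of time, the random switching time you flag as the main obstacle never arises.
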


Crucially, for any $\gamma \in (0,1)$, both algorithms beat $\frac{1}{2}$-USW while maintaining constant CEF, resolving the open question of~\cite{hosseini_class_2024}. 
For example, setting $\gamma=0.7915$ simultaneously yields $0.5468$-USW and $0.5468$-CEF for divisible matching, which is a substantial improvement over the previous $\frac12$ ceiling. 
The parameter $\gamma$ gives practitioners a mechanism to navigate the fairness-efficiency frontier based on application needs.

\paragraph{2. Tight price of fairness bounds.} 
We prove a substantially improved upper bound on what any non-wasteful fair algorithm can achieve:
\begin{theorem}[Informal—Price of Fairness]\label{thm:pof-informal}
    For divisible matchings, any randomized non-wasteful $\alpha$-CEF algorithm has USW competitive ratio at most $\frac{1+\alpha - e^{\alpha-1}}{1 + \alpha}$.
\end{theorem}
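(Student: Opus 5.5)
The plan is to build a two-phase adversarial instance. In the first phase, the fairness constraint forces the algorithm to pre-fill the agents of one class to a uniform level of about $\alpha$. In the second phase, an upper-triangular instance on those pre-filled agents leaves the algorithm an $e^{\alpha-1}$ fraction short of optimum. The target expression suggests this shape: at $\alpha=0$ it is $1-\frac1e$, the classical water-filling bound, and at $\alpha=1$ it is $\frac12$. This is consistent with an instance of total optimal value $(1+\alpha)n$ on which the algorithm loses $e^{\alpha-1}n$. Throughout I take items of unit value. By Yao's principle, and since divisible fractional allocations can be averaged, I will argue against a deterministic fractional algorithm. For that I first symmetrize the algorithm over the agents of each class, by randomly permuting their labels. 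This makes the loads within a class equal in expectation, and concavity/linearity of the objective means this loses nothing.

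\textbf{Phase 1 (fairness forcing).} Take two classes $A$ and $B$, each with $n$ agents. A batch of items arrives that every agent of both classes values. Because the algorithm is non-wasteful, all of these items are fully matched. The optimum intends to route them to $B$, whose agents are otherwise idle. The role of $\alpha$-CEF is to force class $A$ to receive a share: class $A$ values $B$'s bundle fully, so $A$ must get at least $\alpha$ times what $B$ received. I will tune the batch size so that, after symmetrization, each $A$-agent is filled to level (at least) $\alpha$ in expectation. The optimum instead puts all of this mass on $B$, so it contributes $\alpha n$ to OPT. If the algorithm could favor one class, I will randomize the roles of $A$ and $B$ (or make the instance symmetric) so that this choice cannot be exploited. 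A more careful version of this phase may be needed, since \emph{at least} $\alpha$ is what matters.

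\textbf{Phase 2 (upper-triangular).} Next come $n$ items $j=1,\dots,n$ in a random order of agents. Item $j$ is liked only by the $A$-agents $j,\dots,n$. The optimum matches item $j$ to agent $j$ with full capacity, giving total OPT $=(1+\alpha)n$. For the algorithm, the symmetrized best response is water-filling from the common starting level $\alpha$. In the continuum limit the level $\ell(t)$ satisfies $\ell'(t)=1/(1-t)$ with $\ell(0)=\alpha$, so saturation occurs at the time $t^*$ with $\ln\frac{1}{1-t^*}=1-\alpha$, that is, $1-t^*=e^{\alpha-1}$. After that time the remaining $e^{\alpha-1}n$ items are wasted. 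The algorithm's total is then at most $(1+\alpha)n-e^{\alpha-1}n$, which gives the claimed ratio. I also need to show that the phase-1 mass on $A$ is itself counted in the algorithm's total, so that it is not double-lost, and that under symmetrization no allocation beats water-filling on this instance. For the latter I plan a standard primal-dual or exchange argument.

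\textbf{Main obstacle.} I expect Phase 1 to be the crux, for two reasons. First, CEF gives only a ratio constraint ($A$ gets $\ge\alpha\cdot$ the value of $B$'s bundle), not an absolute fill level. Second, the algorithm could spread the phase-1 mass non-uniformly across $A$'s agents to hedge against Phase 2. My first attempt at the second issue is to use the random relabeling within $A$, combined with the observation that, by convexity, the Phase-2 loss as a function of the initial fill profile is minimized by the uniform profile. My first attempt at the first issue is to fix the batch size, with a constant possibly different from $\alpha$, so that the forced fill level is exactly $\alpha$. I will then check that the resulting optimum and loss still normalize to $\frac{1+\alpha-e^{\alpha-1}}{1+\alpha}$, and adjust the class sizes if they do not.
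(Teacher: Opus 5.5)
There is a genuine gap in Phase 1. With two classes $A,B$ of $n$ agents each, $\alpha$-CEF cannot push $A$ up to level $\alpha$ at a cost that disappears from the ratio.

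Suppose $m$ universally liked items arrive and $A$ receives $a$ of them. The only constraint is $a\ge\alpha\min(n,m-a)$.
\begin{itemize}
\item If $m=\beta n\le n$, the algorithm can take $a=\frac{\alpha}{1+\alpha}m$ and then water-fill. So the instance certifies at best $1-\frac{e^{\alpha\beta/(1+\alpha)-1}}{1+\beta}$, which is minimized at $\beta=0$. You get nothing beyond $1-\frac1e$.
\item Forcing $a=\alpha n$ requires $B$ to hold a full $n$, i.e.\ $m=(1+\alpha)n$. Then the algorithm is credited with all $(1+\alpha)n$ batch items, while OPT can use only $n$ of them. (If you enlarge $B$, OPT can use all of them, but it gains the same extra $n$.) This gives $\frac{2+\alpha-e^{\alpha-1}}{2}$ or $\frac{2+\alpha-e^{\alpha-1}}{2+\alpha}$, both at least $\frac23$.
\end{itemize}
Your bookkeeping (OPT $=(1+\alpha)n$, ALG $\le(1+\alpha-e^{\alpha-1})n$) silently assumes a batch of only $\alpha n$ items that all land on $A$. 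CEF does not force that.

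The missing idea is to amortize one envied bundle over many envious classes. The paper takes $k-1$ classes of $q$ agents, one class of $q(k-1)$ agents, and $p(k-1)+q$ universally liked items with $p/q\le\alpha$. A small class's optimistic value of the big class's bundle is capped at $q$, so a single envied bundle of $q$ items constrains \emph{all} $k-1$ small classes at once. If they held less than $p(k-1)$ in total, the big class would hold more than $q$, forcing each small class up to at least $\alpha q\ge p$. The overhead $q$ vanishes against $(k-1)q$ as $k\to\infty$, and OPT puts the whole batch into the big class. Then $k-1$ triangular instances on the small classes give exactly the $(1+\alpha)$ normalization you were aiming for.

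Your reduction to a deterministic algorithm also fails exactly where CEF is used. Averaging gives $V_i(\mathbb{E}[X])=\mathbb{E}[V_i(X)]$, but $V_i^*(\mathbb{E}[y_j(X)])\ge\mathbb{E}[V_i^*(y_j(X))]$ by concavity. So the averaged algorithm need not be $\alpha$-CEF, as the paper remarks in its preliminaries. Yao's principle does not help either, because the deterministic algorithms in the support are not individually CEF.

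The paper never averages. It works as follows:
\begin{itemize}
\item It tracks $c_i(\tau)=\mathbb{E}[V_i(X^{(\tau)})]$ and expected agent values.
\item The oblivious adversary deletes the agent of minimum expected value at each triangular step.
\item A small LP with an explicit dual bounds class $i$'s total by $c_i(\tau)+q(1-e^{c_i(\tau)/q-1})+2$.
\end{itemize}
This bound is concave and decreasing in $c_i(\tau)$, so only a lower bound on $\sum_{i<k}c_i(\tau)$ is needed. No within-class symmetrization or uniform-profile convexity argument is required. That lower bound is immediate for deterministic algorithms. When CEF holds only in expectation, it is not automatic, for the same concavity reason, and must be justified directly.
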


This corrects a critical deficiency in prior work. 
Hajiaghayi et al.~\cite{hajiaghayi_fairness_2023} established an upper bound of $\frac{1}{1+\alpha}$, but this bound exceeds $1-\sfrac{1}{e}$ (the absolute online matching limit) for $\alpha \le \frac{1}{e-1} \approx 0.58$, rendering it vacuous precisely where fairness matters most. 
Our bound correctly interpolates between $1-\sfrac{1}{e}$ (as $\alpha \to 0$, recovering the fairness-free limit) and $\frac12$ (as $\alpha \to 1$), providing meaningful constraints across all $\alpha \in (0,1]$. 
Figure~\ref{fig:pof} illustrates how our algorithms nearly match this bound.

\paragraph{Techniques and Broader Impact}
Our algorithmic technique of balancing fair distribution with efficiency maximization is conceptually simple yet powerful: enforce perfect fairness until the expected fractional load of a node reaches a target level $\gamma$, then switch to the optimal efficiency strategy.
The threshold parameter provides interpretability: system designers can explicitly set their fairness-efficiency priority rather than black-box tuning.

Our analytical technique pairs this algorithmic structure with a matching primal-dual framework. 
For EFTT, we construct an approximate dual solution whose definition mirrors the algorithm's phases via a carefully chosen potential function $g(z)$ that transitions at threshold $\gamma$. 
This enables simultaneous analysis of both USW and CEF, rather than analyzing each objective separately.
For Hybrid Ranking, we perform a primal-dual analysis based on a novel marking scheme.

Our price of fairness upper bound (Theorem~\ref{thm:pof-informal}) extends the classical adversarial upper-triangular construction~\cite{karp1990optimal} to a two-phase instance that jointly exploits CEF constraints and non-wastefulness. 
The first phase forces fair algorithms to equalize allocations (capturing fairness cost), while the second phase comprising $k-1$ copies of the upper-triangular instance exploits the resulting imbalance to limit achievable USW. This construction technique may prove useful for establishing price-of-fairness bounds in other constrained online problems.

\paragraph{Organization.}
Section~\ref{sec:rw} surveys related work on online matching, fair division, and price of fairness. Section~\ref{sec:prelim} formalizes the model and fairness notions. Sections~\ref{sec:disible} and~\ref{sec:indivis} present EFTT and Hybrid Ranking with full analysis. Section~\ref{sec:discuss} discusses extensions and open problems.

\usepgfplotslibrary{fillbetween}
\usetikzlibrary{patterns,arrows.meta}

\begin{figure}[t]
  \centering
  \begin{tikzpicture}
    \begin{axis}[
      width=0.78\textwidth,
      height=0.54\textwidth,
      clip=false,
      xlabel={$\alpha$-CEF},
      ylabel={$\beta$-USW},
      xlabel style={font=\small, yshift=2pt},
      ylabel style={font=\small, yshift=-4pt},
      xmin=0, xmax=1,
      ymin=0.50, ymax=0.70,
      xtick={0,0.5,1.0},
      ytick={0.50,0.5654,0.63212},
      yticklabels={$0.50$,$0.5654$,$1-\frac{1}{e}$},
      tick label style={font=\small},
      minor x tick num=1,
      minor y tick num=1,
      grid=major,
      major grid style={line width=0.35pt, draw=gray!25},
      axis line style={line width=0.7pt},
      legend style={
        at={(0.5,-0.18)},
        anchor=north,
        draw=gray!50,
        fill=white,
        fill opacity=0.95,
        text opacity=1,
        legend columns=2,
        font=\small,
        column sep=1em,
        row sep=0.25ex,
        rounded corners=2pt,
      },
      legend cell align={left},
      every axis plot/.append style={line cap=round},
      declare function={
        ub(\a)    = (\a - exp(\a-1) + 1)/(\a + 1);
        prior(\a) = 1/(\a + 1);
        shared(\g)= 1 - exp(\g-1)/(\g+1);
        efttx(\g) = 1 - exp(-\g);
        hybx(\g)  = \g/2;
      },
    ]

\addplot[
      name path=ourubshade,
      domain=0:1,
      samples=300,
      draw=none,
      forget plot
    ] {ub(x)};
    \path[name path=topaxis] (axis cs:0,0.7) -- (axis cs:1,0.7);
    \addplot[
      pattern=north east lines,
      pattern color=gray!60,
      draw=none,
      forget plot
    ] fill between[of=ourubshade and topaxis];

\addplot[
      domain=0:1,
      samples=300,
      line width=2.2pt,
      color={rgb,1:red,0.13;green,0.37;blue,0.71},
    ] ({efttx(x)},{shared(x)});
    \addlegendentry{Equal Filling Till Threshold (divs. ours)}

    \addplot[
      domain=0:1,
      samples=300,
      line width=2.2pt,
      color={rgb,1:red,0.50;green,0.13;blue,0.63},
    ] ({hybx(x)},{shared(x)});
    \addlegendentry{Hybrid Ranking (int. ours)}

\addplot[
      domain=0:1,
      samples=300,
      line width=2.0pt,
      color={rgb,1:red,1;green,0.10;blue,0.10},
      dashed,
      dash pattern=on 3pt off 3pt,
    ] {ub(x)};
    \addlegendentry{PoF Bound (ours)}

    \addplot[
      domain=0.43:1,
      samples=300,
      line width=1.2pt,
      color={rgb,1:red,0.65;green,0.80;blue,0.65},
      dashed,
      dash pattern=on 4pt off 3pt on 1pt off 3pt,
    ] {prior(x)};

    \addplot[
      domain=0:1,
      samples=2,
      line width=1.0pt,
      color=black!45,
      dashed,
      dash pattern=on 3pt off 2pt,
    ] {1 - 1/exp(1)};
    \addlegendentry{PoF Bound \cite{hajiaghayi_fairness_2023}}

\node[
      fill=white,
      inner sep=2pt,
      font=\footnotesize,
      align=center
    ] at (axis cs:0.5,0.67) {no online solution};

\draw[
      -{Latex[length=2.0mm]},
      very thick,
      black
    ] (axis cs:0.70,0.588) -- (axis cs:0.70,0.564);

    \node[
      fill=white,
      inner sep=1.5pt,
      font=\footnotesize,
      align=center
    ] at (axis cs:0.82,0.578) {improved PoF\\ bound};

\node[
      fill=white,
      inner sep=1pt,
      font=\footnotesize
    ] at (axis cs:0.75,0.6385) {USW upper bound};

\addplot[
      only marks,
      mark=*,
      mark size=2.2pt,
      mark options={
        fill=white,
        draw={rgb,1:red,0.13;green,0.37;blue,0.71},
        line width=1.0pt
      },
    ] coordinates {
      (0.0,0.6321)
      (0.5,0.5654)
    };

    \node[fill=white, inner sep=1pt, font=\scriptsize, align=center] (A) at (axis cs:0.2,0.65) {Online matching bound~\cite{karp1990optimal} \\ $\gamma=0$};
    \draw[-{Latex[length=1.4mm]}, thin] (A) -- (axis cs:0.01,0.6331);

\addplot[
      only marks,
      mark=*,
      mark size=2.2pt,
      mark options={
        fill=white,
        draw={rgb,1:red,0.50;green,0.13;blue,0.63},
        line width=1.0pt
      },
    ] coordinates {
      (0.5,0.5)
};

    \node[fill=white, inner sep=1pt, font=\scriptsize] (C) at (axis cs:0.35,0.52) {\textsc{Random}~\cite{hajiaghayi_fairness_2023}};
    \draw[-{Latex[length=1.4mm]}, thin] (C) -- (axis cs:0.495,0.501);

\addplot[
      only marks,
      mark=*,
      mark size=2.2pt,
      mark options={
        fill=white,
        draw={rgb,1:red,0.13;green,0.37;blue,0.71},
        line width=1.0pt
      },
    ] coordinates {
      (1-1/exp(1),0.5)
};

    \node[fill=white, inner sep=1pt, font=\scriptsize] (D) at (axis cs:0.77,0.51) {\textsc{Equal Filling}~\cite{hosseini_class_2024}};
    \draw[-{Latex[length=1.4mm]}, thin] (D) -- (axis cs:0.635,0.501);

    \draw[
      <->,
      >={Latex[length=2.0mm]},
very thick,
      black
    ] (axis cs:0.50,0.505) -- (axis cs:0.50,0.5615);

    \node[
      fill=white,
      inner sep=1.5pt,
      font=\footnotesize,
      align=center
    ] at (axis cs:0.6,0.53) {divisibility\\gap};

    \end{axis}
  \end{tikzpicture}
  \caption{
    Price-of-fairness trade-off between the USW and
    CEF approximate guarantees. 
    The hatched region above our upper bound denotes guarantees for which no online solution exists. 
    Solid curves are achieved by our algorithms; dashed curves are upper bounds
    (shorthand divs. and int. are for divisible and integral matchings respectively).
  }
  \label{fig:pof}
\end{figure}

\section{Related Work} \label{sec:rw}

\paragraph{Online Matching.} 
The foundation for our efficiency benchmarks comes from the seminal work of Karp, Vazirani, and Vazirani~\cite{karp1990optimal}, who introduced the RANKING algorithm and proved it is $1-\sfrac{1}{e}$-competitive and optimal for adversarial online bipartite matching. 
For divisible matching, Kalyanasundaram and Pruhs~\cite{kalyanasundaram2000optimal} showed that water-filling achieves the same guarantee deterministically. 
These results establish a fundamental barrier: no online algorithm can exceed $1-\sfrac{1}{e}$ competitive ratio, regardless of computational power or randomization.
Online matching remains an active research area, with recent work focusing on generalizations of the bipartite matching problem, such as edge-weighted~\cite{blanc_multiway_2022-1, fahrbach_edge-weighted_2022}, vertex-weighted~\cite{huang2019online,jin2021improved} and non-bipartite matching \cite{gamlath_online_2019-2} as well as relaxed input models
such as random-order arrivals, \cite{mahdian2011online,karande2011online} and stochastic matching \cite{feldman2009online, huang2021online}.
For comprehensive surveys, we refer readers to \cite{mehta2013online} and \cite{huang_online_2024-6}.
Our work operates in the adversarial unweighted model, where $1-\sfrac{1}{e}$ remains the absolute efficiency ceiling.

\paragraph{Fair Division.} 
There is a vast literature on fair division in the offline setting.
For divisible goods with additive valuations, Varian~\cite{varian1973equity} showed that envy-free and Pareto-optimal allocations always exist, and Eisenberg and Gale~\cite{eisenberg1959consensus} provided efficient computation via market equilibria. 
For indivisible goods, exact envy-freeness is impossible in general. 
The community has developed two primary relaxations: envy-freeness up to one item (EF1)~\cite{lipton2004approximately}, which guarantees that envy can be eliminated by removing at most one item from the envied agent's bundle, and maximin share fairness (MMS)~\cite{budish2011combinatorial}, which ensures each agent receives at least their guaranteed share in a worst-case division. 
Caragiannis et al.~\cite{caragiannis2019unreasonable} proved EF1 allocations always exist under monotone valuations and are Pareto-optimal under additive valuations. 
While MMS allocations may not exist, polynomial-time approximations are known~\cite{garg2020improved, ghodsi2018fair, hosseini2022ordinal, hosseini2021guaranteeing, procaccia2014fair}.
Our setting differs fundamentally from this offline literature as we treat each class as an agent whose value is the sum of individual agent values, and we face irrevocable online decisions under adversarial arrivals. In the online setting fair division~\cite{aleksandrov2015online, benade2018make, gorokh2020online, walsh2011online, zeng2020fairness} has primarily focused on settings with additive valuations, where agents' values sum across items.

\paragraph{Fair Online Matching.} 
The online class matching framework was introduced by Hosseini et al.~\cite{hosseini_class_2024}, who formalized the fairness objectives we study: class envy-freeness (CEF), class proportionality (CPROP), and class maximin share (CMMS). 
They established strong results for deterministic algorithms in both divisible and indivisible settings. 
For divisible matching, they proved deterministic algorithms can achieve $(1-\sfrac{1}{e})$-CEF, which is optimal. For indivisible matching, they showed $\frac{1}{2}$-CEF is achievable deterministically and proved this is tight. 
For randomized algorithms, they achieved $0.593$-CPROP for indivisible matching via online correlated selection, though this algorithm provides no CEF guarantee.
Their work established that all known fair algorithms achieve at most $\frac{1}{2}$-USW due to the non-wastefulness constraint (Proposition~\ref{prop:nw-half-usw}). 
Crucially, they posed as an explicit open problem whether any algorithm could simultaneously achieve constant-factor CEF and USW strictly better than $\frac{1}{2}$. This question remained open despite subsequent work.

Yokoyama and Igarashi~\cite{yokoyamaasymptotic} recently made progress in the asymptotic regime, providing an algorithm that achieves CEF with high probability as the number of agents grows large. However, their result provides no finite-sample guarantees, no quantification of the USW tradeoff, and no parametric control over the fairness-efficiency spectrum.
Most recently, Hajiaghayi et al.~\cite{hajiaghayi_fairness_2023} made important contributions to understanding this landscape. They provided the \textsc{Random} algorithm for indivisible matching, which achieves simultaneous $\frac{1}{2}$-CEF, $\frac{1}{2}$-CPROP, and $\frac{1}{2}$-USW guarantees while maintaining non-wastefulness—the first algorithm to achieve any constant CEF with non-wastefulness. 
They complemented this with tight upper bounds showing no non-wasteful algorithm can achieve better than $\frac{e^2-1}{e^2+1} \approx 0.761$-CEF for indivisible matching or better than $0.677$-CEF for deterministic divisible matching.
However, their work left the central question unresolved: all their algorithms achieve exactly $\frac{1}{2}$-USW, matching the lower bound from non-wastefulness alone.

\section{Preliminaries} \label{sec:prelim}

We follow the online bipartite matching model of~\cite{hajiaghayi_fairness_2023,hosseini_class_2024}.
Consider a bipartite graph $G = (N, M, E)$ where $N$ is a known set of \emph{agents},
$M$ is a set of \emph{items}, and $E \subseteq \items \times \agents$.
Agent $a$ \emph{likes} item $o$ when $(o, a) \in E$.
The agents are partitioned into $k$ known \emph{classes} $N_1, \ldots, N_k$;
we write $[k] = \{1, \ldots, k\}$ and use ``class $i$'' for $N_i$ throughout.
Items arrive one at a time in an adversarial order.
Upon arrival of item $o$, the neighbourhood $\Nb{o}$ is revealed
and the algorithm must immediately and irrevocably
assign (a fraction of) $o$ to agents.

\paragraph{Matchings.}
A \emph{fractional matching} is a matrix
$X = (x_{o,a}) \in [0,1]^{M \times N}$ with
$\sum_{a} x_{o,a} \le 1$ for all $o \in M$ and
$\sum_{o} x_{o,a} \le 1$ for all $a \in N$.
An \emph{integral matching} further restricts $x_{o,a} \in \{0,1\}$;
these correspond to the \emph{divisible} and \emph{indivisible} settings, respectively.
We write $\degX{v} = \sum_{e \ni v} x_e$ for the fractional degree of vertex $v$,
and say agent $a$ is \emph{saturated} (resp.\ item $o$ is \emph{assigned})
when $\degX{a} = 1$ (resp.\ $\degX{o} = 1$).

\paragraph{Valuations.}
The value of agent $a$ is $V_a(X) = \sum_{o : (a,o) \in E} x_{o,a}$,
and the \emph{class value} is $V_i(X) = \sum_{a \in N_i} V_a(X)$.
For any matching $X$, let $y_i(X)\in [0,1]^{M}$ be defined as $y_i(X)_o = \sum_{a \in N_i} x_{o,a}$ for all $o \in M$.
For an integral matching $X$, let $Y_i(X)$ be the set of items assigned to agents in class $i$; then $y_i(X)$ is the incidence vector of $Y_i(X)$.
For a vector $y \in [0,1]^M$, the \emph{optimistic valuation} $\Vstar{i}{y}$
is the size of a maximum fractional matching between $N_i$ and $M$ that has fractional degree at most $y_o$ for each item $o$ and at most $1$ for each agent $a$.
For a set $S \subseteq M$, we write $\Vstar{i}{S}$ as shorthand for $\Vstar{i}{\mathbf{1}_S}$.
Crucially, $\Vstar{i}{\cdot}$ is
\emph{subadditive} and \emph{concave}.

\paragraph{Notions of Fairness \& Efficiency.}
We adopt the following class-level analogues of standard fair-division criteria.

\begin{definition}[$\alpha$-CEF]
For $\alpha \in (0,1]$, a matching $X$ is \emph{$\alpha$-class envy-free} ($\alpha$-CEF) if
$V_i(X) \ge \alpha \cdot \Vstar{i}{y_j(X)}$ for all $i,j \in [k]$.
\end{definition}

The \emph{utilitarian social welfare} is $\usw(X) = \sum_i V_i(X)$,
and $X$ is \emph{$\alpha$-USW} if $\usw(X) \ge \alpha \cdot \usw(X^*)$
for all matchings $X^*$.
A matching is \emph{non-wasteful} (NW) if no item is unassigned while some
unsaturated agent likes it; equivalently, NW matchings are maximal matchings.

\begin{proposition}[Folklore]
\label{prop:nw-half-usw}
Every non-wasteful matching is $\tfrac{1}{2}$-USW.
\end{proposition}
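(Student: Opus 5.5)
We prove the claim by a charging argument against an arbitrary comparison matching $X^*$. The same argument covers the divisible (fractional) and indivisible (integral) cases. Let $X$ be non-wasteful, i.e.\ maximal in the fractional sense. We will show
\[
\usw(X) = \sum_{o} \degX{o} = \sum_a \degX{a} \ge \tfrac12 \usw(X^*).
\]

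\textbf{Step 1: each edge of $X^*$ is covered.} Fix an edge $(o,a)\in E$ with $x^*_{o,a}>0$. Non-wastefulness says that if $a$ is unsaturated ($\degX{a}<1$), then $o$ cannot be left with unassigned capacity. In the fractional reading, either $\degX{a}=1$ or $\degX{o}=1$. Hence for every such edge
\[
\degX{o}+\degX{a} \ge 1 .
\]
If the fractional definition of non-wastefulness is read more weakly, we first note that the paper says NW matchings are ``equivalently'' maximal matchings. For fractional matchings, maximality means precisely that no edge can be increased, i.e.\ one of its endpoints is tight. So the inequality above holds under the stated equivalence.

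\textbf{Step 2: weight and sum.} Multiply the inequality from Step 1 by $x^*_{o,a}$ and sum over all edges:
\[
\usw(X^*) = \sum_{(o,a)} x^*_{o,a} \le \sum_{(o,a)} x^*_{o,a}\bigl(\degX{o}+\degX{a}\bigr) = \sum_o \degX{o}\,\deg_{X^*}(o) + \sum_a \degX{a}\,\deg_{X^*}(a).
\]
Since $X^*$ is a matching, every fractional degree satisfies $\deg_{X^*}(v)\le 1$. The right-hand side is therefore at most
\[
\sum_o \degX{o}+\sum_a \degX{a} = 2\,\usw(X).
\]
Dividing by $2$ gives $\usw(X)\ge \tfrac12\usw(X^*)$ for every matching $X^*$, which is exactly the $\tfrac12$-USW guarantee.

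The only delicate point is Step 1: extracting the ``one endpoint is tight'' property from the definition of non-wastefulness in the divisible setting. We rely on the paper's stated equivalence of NW with maximality. Everything after that is the standard vertex-cover/LP-duality argument: setting a dual value of $\degX{v}$ on every vertex gives a feasible fractional vertex cover of total value $2\usw(X)$.
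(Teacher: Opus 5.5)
Your proof is correct. The paper states this proposition as folklore and gives no proof, so there is no route of its own to compare against. Your argument is the standard one, and it fits the paper's approximate-dual framework from Lemma~\ref{lem:eftt-usw}: taking the constant split $g \equiv \tfrac12$ gives $y_v = \tfrac12\degX{v}$, which satisfies $\sum_v y_v = \usw(X)$ and $y_o + y_a \ge \tfrac12$ on every edge. That is your vertex-cover certificate scaled by $\tfrac12$.

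On the point you flag as delicate, you do not need to lean on the ``equivalently maximal'' remark. The paper defines $o$ to be assigned exactly when $\degX{o}=1$, so non-wastefulness literally says that every edge has an endpoint of fractional degree $1$. This strong reading is also necessary. Under the weak reading (``unassigned'' meaning $\degX{o}=0$), a single edge carrying $x_{o,a}=\varepsilon$ would count as non-wasteful yet have welfare only $\varepsilon$.
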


\paragraph{Randomized algorithms.}
For randomized algorithms, all guarantees hold in expectation over the algorithm's
randomness.
For instance, $\alpha$-CEF requires
$\EE[V_i(X)] \ge \alpha \cdot \EE[\Vstar{i}{y_{j}(X)}]$ for all $i,j$.
Note that since $\Vstar{i}{\cdot}$ is concave,
Jensen's inequality gives $\Vstar{i}{\EE[y_{j}(X)]} \ge \EE[\Vstar{i}{y_{j}(X)}]$,
so a randomized $\alpha$-CEF guarantee does not automatically imply a
deterministic fractional algorithm.
 
\section{The Price of Fairness in Divisible Matching} \label{sec:disible}
We first consider the divisible setting where arriving items can be fractionally matched to agents

\subsection{Divisible Algorithm}
We proceed to prove our main algorithmic result with Theorem~\ref{thm:eftt-informal}, formalized as follows.

\begin{theorem} \label{thm:divisible-algo}
    For every $\gamma \in [0,1]$, there exist an online divisible algorithm that achieves $\left( 1 - \frac{e^{\gamma-1}}{\gamma + 1} \right)$-USW and $(1-e^{-\gamma})$-CEF guarantees simultaneously.
\end{theorem}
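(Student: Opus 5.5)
The plan is to define the algorithm (EFTT) explicitly in continuous time and then run two separate primal-dual analyses, one for each guarantee. Items are processed continuously: when item $o$ arrives, we pour its unit mass in infinitesimal increments. At any moment, call a class $i$ \emph{active for $o$} if some agent of $N_i \cap \Nb{o}$ has fractional degree below $\gamma$.

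\textbf{Phase 1.} While at least one class is active, the current increment is split equally among the active classes. Inside each class, it is water-filled onto the lowest-loaded neighbour of $o$.

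\textbf{Phase 2.} Once no class is active, the remaining mass of $o$ is water-filled onto the globally lowest-loaded unsaturated neighbour of $o$.

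The algorithm is non-wasteful by construction, since mass stops only when all neighbours are saturated. For $\gamma=0$ it is ordinary water-filling, and for $\gamma=1$ it is the Equal Filling algorithm of \cite{hosseini_class_2024}.

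\textbf{USW.} I would use the standard fractional primal-dual (charging) scheme. When $dx$ units of an item go to an agent currently at level $z$, set $\alpha_a \mathrel{+}= g(z)\,dx$ and $\beta_o \mathrel{+}= (1-g(z))\,dx$, so the dual objective equals $\usw(X)$. The goal is to show $\alpha_a+\beta_o\ge F$ for every edge $(o,a)$, where $F=1-\frac{e^{\gamma-1}}{\gamma+1}$; weak duality then gives $F$-USW. I would take $g$ piecewise: constant or slowly varying on $[0,\gamma)$ and exponential, $g(z)=c\,e^{z-\gamma'}$-type, on $[\gamma,1]$. The free constants are fixed by three requirements:
\begin{itemize}
\item the water-filling edge inequality $\int_0^z g + 1 - g(z)\ge F$ for $z\ge\gamma$;
\item the saturated case $\int_0^1 g\ge F$;
\item the Phase 1 case.
\end{itemize}
There are two edge cases. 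If $a$ ends above $\gamma$ when $o$ finishes, then all Phase 2 mass of $o$ went to agents at level at most $z_a$, and Phase 1 mass went to agents at level at most $\gamma$. Hence $\beta_o\ge 1-g(\max(z_a,\gamma))$, and the water-filling inequality applies. If $a$ stays below $\gamma$, then all of $o$ was placed in Phase 1, so $\beta_o\ge 1-\sup_{z<\gamma} g(z)$, and we only need $1-g(\gamma^-)\ge F$.

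\textbf{Main obstacle.} I expect the hardest step to be choosing the shape of $g$ on $[0,\gamma)$ so that these constraints meet exactly at $F=1-\frac{e^{\gamma-1}}{\gamma+1}$. A naive constant piece gives a slightly different constant, so I would first solve the resulting small ODE/LP for the profile of $g$ and check that the optimum matches the stated value. The subtle point is that in Phase 1 an item can still be split onto agents at level up to $\gamma$ even when $a$ itself is much lower, which is exactly why the Phase 1 bound on $\beta_o$ uses $g(\gamma^-)$.

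\textbf{CEF.} Fix classes $i$ and $j$, and bound $V^*_i(y_j(X))$ via the dual of the maximum fractional matching between $N_i$ and $y_j$. Consider any item $o$ with mass $y_j(X)_o>0$ and an agent $a\in N_i\cap\Nb{o}$.
\begin{itemize}
\item If $a$ had level below $\gamma$ throughout the placement of $o$, then class $i$ was active whenever class $j$ received mass of $o$. The equal-split rule then gives class $i$ at least as much of $o$ as class $j$ received in Phase 1, and $j$ received nothing in Phase 2.
\item Otherwise $a$ has level at least $\gamma$.
\end{itemize}
This is precisely the situation of the Equal Filling analysis of \cite{hosseini_class_2024} with the saturation level $1$ replaced by $\gamma$. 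I would therefore repeat their dual fitting with potential $h(z)=e^{z-\gamma}$ on $[0,\gamma]$: each unit class $i$ receives pays for $\frac{1}{1-e^{-\gamma}}$ units of dual covering $V^*_i(y_j)$. Agents at level at least $\gamma$ are covered by a dual value of at least $\gamma$-level mass already in $V_i(X)$. Combining the two cases yields $V_i(X)\ge(1-e^{-\gamma})\,V^*_i(y_j(X))$.
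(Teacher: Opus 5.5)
Your architecture matches the paper's: the same EFTT algorithm and the same charging rule ($g(z)\,dx$ to the agent, $(1-g(z))\,dx$ to the item). The three edge cases are also the same (saturated, ends above $\gamma$, stays below $\gamma$). For CEF, your exponential potential corresponds to the paper's averaging of the threshold bound $V_i^*(y_j(X))\le\int_0^\theta f(z)\,dz+f(\theta)$, $\theta\in[0,\gamma]$, against the weight $e^{\theta-\gamma}$.

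\textbf{The gap.} The USW proof consists exactly of choosing $g$ and checking your three conditions. You leave this open, and your diagnosis of the obstacle is wrong. A constant piece on $[0,\gamma]$ gives exactly $F=1-\frac{e^{\gamma-1}}{\gamma+1}$. The loss you saw presumably comes from keeping $g$ continuous at $\gamma$; for instance, $c$ on $[0,\gamma]$ followed by $ce^{z-\gamma}$ only yields $1-1/(\gamma+e^{1-\gamma})$. The paper takes
\[
g(z)=\frac{e^{\gamma-1}}{\gamma+1}\ \text{ for } z\le\gamma,\qquad g(z)=e^{z-1}\ \text{ for } z>\gamma,
\]
which jumps up at $\gamma$ and satisfies $g(1)=1$. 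All three of your conditions then hold with equality:
\begin{itemize}
\item Phase 1: $1-g(\gamma)=F$.
\item Water-filling, for $z\in(\gamma,1)$: $1-e^{z-1}+\frac{\gamma e^{\gamma-1}}{\gamma+1}+e^{z-1}-e^{\gamma-1}=F$.
\item Saturated: $\int_0^1 g=\frac{\gamma e^{\gamma-1}}{\gamma+1}+1-e^{\gamma-1}=F$.
\end{itemize}
Your proposed remedy, a non-constant profile on $[0,\gamma)$, cannot help. The Phase 1 condition caps $g$ by $1-F$ on $[0,\gamma)$. Meanwhile, the water-filling condition for $z>\gamma$ and the saturated condition both improve as $\int_0^\gamma g$ grows. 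So the constant $1-F$ is optimal, and solving your constraint system returns exactly this $g$ and this $F$.

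\textbf{CEF.} Your two bullets split only at level $\gamma$, and that bound alone gives just $\frac{\gamma}{1+\gamma}$. You need the split at every $\theta\in[0,\gamma]$. Take an item liked by some $a\in N_i$ whose final level is below $\theta$. Then class $i$ receives at least class $j$'s share of it, \emph{and} that share lands on class-$i$ agents of current level below $\theta$, by within-class water-filling. Hence it is counted in $\int_0^\theta f$. The $e^{\theta-\gamma}$ weighting over $\theta$ is what produces $1-e^{-\gamma}$. The paper carries this out directly rather than deferring to \cite{hosseini_class_2024}, and you should too.
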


Let $\gamma \in [0,1]$ define a threshold. Our algorithm continuously increases the matching, meaning that the mass of $o$ is allocated infinitesimally over time according to the current rule. This can be interpreted as a water-filling process, where levels rise continuously. 
Upon the arrival of item $o$, our \textbf{\underline{E}qual \underline{F}illing \underline{T}ill \underline{T}hreshold (EFTT)} algorithm proceeds in two phases.

\paragraph{Phase I: Equal Filling (up to threshold $\gamma$)}
Let $Z_\gamma$ be the set of classes containing at least one agent that likes $o$ and has fractional degree less than $\gamma$.
While $Z_\gamma$ is non-empty and $o$ is not fully assigned, continuously increase the matching in the following way:
\begin{enumerate}
    \item Distribute load of $o$ equally among all classes in $Z_\gamma$
    \item Within each class $i \in Z_\gamma$, allocate this mass to the agent that likes $o$ and has minimum fractional degree
    \item When any agent reaches fractional degree $\gamma$, if their class no longer has any agent liking $o$ with degree less than $\gamma$, remove that class from $Z_\gamma$.
\end{enumerate}

\paragraph{Phase II: Water Filling (beyond threshold $\gamma$)}
Once $Z_\gamma = \emptyset$, allocate mass of $o$ to the agent that likes $o$ with smallest fractional degree, regardless of class membership.

Naturally, the USW and CEF guarantees of this algorithm are dictated by the value of $\gamma$.
We proceed to derive this dependence.

\begin{lemma}
    The EFTT algorithm with threshold $\gamma$ is $\left( 1 - \frac{e^{\gamma-1}}{\gamma + 1} \right)$-USW.\label{lem:eftt-usw}
\end{lemma}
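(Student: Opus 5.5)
We prove the bound by an online primal-dual (charging) argument in the style of the water-filling analysis. The dual of the offline fractional matching LP asks for $\alpha_a,\beta_o\ge 0$ with $\alpha_a+\beta_o\ge 1$ for every edge $(o,a)\in E$. By weak duality, it suffices to maintain duals with $\sum_a\alpha_a+\sum_o\beta_o=\usw(X)$ and $\alpha_a+\beta_o\ge F$ on every edge, where $F:=1-\frac{e^{\gamma-1}}{\gamma+1}$. Then $\usw(X^*)\le \usw(X)/F$ for every matching $X^*$.

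\textbf{Charging rule and choice of $g$.} We fix a nondecreasing function $g:[0,1]\to[0,1]$. Whenever an infinitesimal amount $dx$ of item $o$ is routed to an agent $a$ whose current fractional degree is $z$, we increase $\alpha_a$ by $g(z)\,dx$ and $\beta_o$ by $(1-g(z))\,dx$. This keeps the dual objective equal to the primal value.

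We take $g$ constant on the Phase I range and exponential afterwards:
\[
g(z)=c \ \text{ for } z<\gamma, \qquad g(z)=c(1+\gamma)e^{z-\gamma}\ \text{ for } z\ge\gamma, \qquad c=\frac{e^{\gamma-1}}{1+\gamma}.
\]
These choices make $g(1)=1$ and $g\ge c$ throughout. As a consequence, $\alpha_a=\int_0^{z_a}g$ depends only on the final degree $z_a$ of $a$.

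\textbf{Edge feasibility.} Fix an edge $(o,a)$ and let $z_a$ be the degree of $a$ at the end of the run; degrees only grow, so $z_a$ bounds $a$'s degree at every time $o$ was being routed. If $o$ is not fully assigned, then $a$ was saturated after $o$ was processed (the algorithm is non-wasteful), so $z_a=1$. We split into two cases according to $z_a$.

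\emph{Case $z_a<\gamma$.} Then $a$ kept degree below $\gamma$ throughout $o$'s arrival. Hence $a$'s class never left $Z_\gamma$, Phase II never started, and $o$ was fully assigned in Phase I. Every unit of $o$ therefore went to agents of degree $<\gamma$, which gives $\beta_o=1-c$. This is the reason $g$ is constant below $\gamma$. In Phase I, mass flows to other classes at levels that can exceed $z_a$, so the usual water-filling monotonicity argument fails here; making $g$ flat removes this dependence. We obtain $\alpha_a+\beta_o\ge 1-c=F$.

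\emph{Case $z_a\ge\gamma$.} Any Phase I mass of $o$ went to levels below $\gamma$ and contributes $1-c\ge 1-g(z_a)$ per unit. Any Phase II mass went to a minimum-degree neighbour of $o$, whose level is at most $a$'s current degree, hence at most $z_a$. Since $g$ is monotone, that mass contributes at least $1-g(z_a)$ per unit. If $o$ is fully assigned, this yields $\beta_o\ge 1-g(z_a)$. If $o$ is not fully assigned, then $z_a=1$ and $g(1)=1$ make the same bound trivial. Combining, it remains to verify
\[
\gamma c+\int_\gamma^{z_a}g(t)\,dt+1-g(z_a)\ \ge\ F .
\]
The left-hand side has derivative $g(z_a)-g'(z_a)=0$ for $z_a\ge\gamma$, because $g$ is exponential there. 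So it is constant and equal to its value at $z_a=\gamma$, namely $\gamma c+1-c(1+\gamma)=1-c=F$.

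\textbf{Conclusion and main obstacle.} Both cases give $\alpha_a+\beta_o\ge F$. This covers the saturated case $z_a=1$ as well, whose only requirement is $g(1)\le 1$, and that requirement is what pins down $c$. Hence EFTT is $F$-USW. The step needing the most care is the Phase I subcase, where $a$'s class is active but mass is also routed to other classes at higher levels. The plan handles it with the flat segment of $g$ together with the observation that an active class of $a$ forces $o$ to finish within Phase I.
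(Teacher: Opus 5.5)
Your proposal is correct and follows essentially the same primal-dual argument as the paper. Your $g$ is the paper's function, since $c(1+\gamma)e^{z-\gamma}=e^{z-1}$, the charging rule is identical, and your case $z_a\ge\gamma$ merges the paper's cases $\deg_X(a)=1$ and $\deg_X(a)\in(\gamma,1)$. The only differences are presentational: you separate Phase I mass (always at levels below $\gamma$) from Phase II mass explicitly, and you treat the boundary $z_a=\gamma$ through $\alpha_a+\beta_o$ rather than $\beta_o$ alone, which is slightly more careful than the paper's write-up.
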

\begin{proof}
    Let $\delta = 1-\frac{e^{\gamma-1}}{\gamma + 1}$. The guarantee is proved via the standard fractional bipartite matching linear program. 
    Recall that $V = N \cup M$.
The primal and corresponding dual are:
\begin{center}
    \begin{minipage}[t]{0.45\textwidth}
        \centering
        \textbf{Primal (P)}
        \begin{align*}
            \max\quad & \sum_{e \in E} x_e \\
            \text{s.t.}\quad & \sum_{e \in \delta(v)} x_e \le 1 \quad && \forall v \in V, \\
            & x_e \ge 0 \quad && \forall e \in E.
        \end{align*}
    \end{minipage}
    \hspace{0.5em}\vrule\hspace{0.5em}\begin{minipage}[t]{0.45\textwidth}
    \centering
    \textbf{Dual (D)}
    \begin{align*}
        \min \quad & \sum_{v \in V} y_v \\
        \text{s.t.} \quad & y_o + y_a \ge 1 \quad && \forall e = (o,a) \in E, \\
        & y_v \ge 0 \quad && \forall v \in V.
    \end{align*}
    \end{minipage}
\end{center}
    By weak duality, for any feasible primal $x$ and feasible dual $y$, $\sum_e x_e \le \sum_v y_v$. A $\delta$-approximate dual solution for a feasible primal $x$ is a vector $y \ge 0$ such that $\sum_v y_v = \sum_e x_e$ and $y_o + y_a \ge \delta$ for every $e = (o,a) \in E$. Then $\sum_e x_e = \sum_v y_v \ge \delta \cdot |\mathrm{OPT}|$ (since $y/\delta$ is feasible for (\textbf{D}), so $\sum_v y_v \ge \delta \cdot \min(\textbf{D}) = \delta \cdot \max(\textbf{P})$). Thus the primal value is at least $\delta$ times the optimum, i.e. the algorithm is $\delta$-USW.

    It therefore suffices to construct a $\delta$-approximate dual solution: a vector $y \in \mathbb{R}^V_{\ge 0}$ that satisfies $\sum_{v \in V} y_v = \sum_{e \in E} x_e$ and $\sum_{v \in e} y_v \ge \delta \quad \forall e \in E$.
To construct such a vector, start with $y_v = 0$ for all $v\in V$. 
    While the fractional matching is continuously increasing, we also increase the dual solution.
    Define the following non-decreasing function $g$:
    \begin{align*}
        g(z)=\begin{cases}
            \frac{\exp(\gamma-1)}{\gamma +1}& \text{if } z\leq \gamma,\\
            \exp(z-1) & \text{otherwise}.
        \end{cases}
    \end{align*}
    While increasing the matching on edge $e=(o,a)$ at rate $r$, we increase $y_a$ at rate $g(\degX{a})r$ and $y_o$ at rate $(1-g(\degX{a}))r$.
Note that this continuous allocation procedure ensures the equality $\sum_{v\in V}y_v = \sum_{e\in E}x_e$ is maintained.
    Further observe that for $\degX{a}\in [0,1]$ we have $g(\degX{a})\in [0,1]$.
    Thus, each component of $y$ is non-negative.

    Now, consider $x$ and $y$ after the execution of the algorithm. Take an arbitrary edge $e=(o,a)$ with $o$ an online node. If $\degX{a}=1$, then we must have:
    \begin{align*}
        y_o+y_a \geq y_a \geq \int_0^1 g(z)dz = \gamma\cdot \frac{\exp(\gamma-1)}{\gamma +1} + 1 - \exp(\gamma-1) = \delta.
    \end{align*}

    If instead $\degX{a}< 1$ at termination, then we must have $\degX{o}= 1$, as our algorithm is non-wasteful.
Further, if $\degX{a}\in (\gamma, 1)$, then $o$ must have been matched exclusively to $a'$ with $\degX{a'}\leq \degX{a}$ at the time of matching, so the total dual mass assigned to $o$ from those matches is at least $1-g(\degX{a})$, as $g$ is non-decreasing. 
    This implies that $y_o\geq 1-g(\degX{a})$.
    Therefore, $y_a\geq \int_0^{\degX{a}}g(z)dz$ which implies:
    \begin{align*}
        y_o+y_a &\geq 1-g(\degX{a}) + \int_0^{\degX{a}}g(z)dz\\
&= 1-\exp(\degX{a}-1) +   \gamma\cdot \frac{\exp(\gamma-1)}{\gamma +1}+ \exp(\degX{a}-1) - \exp(\gamma-1)=\delta.
    \end{align*}
    
    Finally, if $\degX{a}\leq \gamma$, then $o$ must have been matched only to agents $a'$ with $\degX{a'}\leq \gamma$. 
    Hence, $y_o=1-g(\gamma)$, since $g$ is constant for $z\leq \gamma$. 
So we must have:
    \begin{align*}
        y_o+y_a \geq y_o = 1-g(\gamma)=\delta.
    \end{align*}
\end{proof}

\begin{lemma} \label{lem:eftt-cef}
    EFTT with threshold $\gamma$ is $(1-e^{-\gamma})$-CEF.
\end{lemma}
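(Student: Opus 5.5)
We fix two classes $i,j$ and show $V_i(X)\ge (1-e^{-\gamma})\,\Vstar{i}{y_j(X)}$ by a primal-dual argument, in the spirit of the Equal Filling analysis of~\cite{hosseini_class_2024}. Write $y:=y_j(X)$. By LP duality, $\Vstar{i}{y}$ equals the minimum of $\sum_{a\in N_i}\beta_a+\sum_{o}y_o\,\alpha_o$ over $\alpha,\beta\ge 0$ with $\alpha_o+\beta_a\ge 1$ for every edge $(o,a)$, $a\in N_i$. So it suffices to construct nonnegative $\alpha,\beta$ such that:
\begin{itemize}
  \item $\sum_{a\in N_i}\beta_a+\sum_o y_o\alpha_o\le V_i(X)$;
  \item $\alpha_o+\beta_a\ge 1-e^{-\gamma}$ on every edge between $N_i$ and an item $o$ with $y_o>0$.
\end{itemize}

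Charging is done online, only while agents of class $i$ receive mass in Phase I (below level $\gamma$). When agent $a\in N_i$ at fractional degree $z\le\gamma$ receives mass $dx$ of item $o$, we set
\[
  d\beta_a=\gamma\, e^{\,z-\gamma}\cdot\tfrac{1}{\gamma}\cdot(\text{normalising}),
\]
and give the rest to $\alpha_o$. Concretely, we pick $\beta_a$ growing so that $\beta_a\ge \int_0^{z} e^{t-\gamma}\,dt = e^{z-\gamma}-e^{-\gamma}$. This gives $\beta_a = 1-e^{-\gamma}$ once $z=\gamma$, and $\alpha_o$ receives $(1-e^{z-\gamma})$ per unit of $o$'s class-$i$ mass, normalised by $y_o$.

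The key fairness fact from Phase I is the following. While both $i$ and $j$ lie in $Z_\gamma$, the item's mass goes equally to $i$ and $j$. Hence if $o$ has a neighbour in $N_i$ of degree $<\gamma$ during its whole allocation, the mass of $o$ given to class $i$ is at least the mass given to $j$ during Phase I. Any mass given to $j$ afterward happens only once class $i$'s neighbours of $o$ are all at level $\ge\gamma$.

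The edge check for $(o,a)$ with $a\in N_i$ and $y_o>0$ splits into two cases.

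\textbf{Case 1: $\deg_X(a)\ge\gamma$ at the end.} Then $\beta_a\ge 1-e^{-\gamma}$ and we are done.

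\textbf{Case 2: $\deg_X(a)=z_a<\gamma$ at the end.} Then $a$ was below $\gamma$ throughout $o$'s arrival, so class $i$ stayed in $Z_\gamma$ for all of $o$. The class-$i$ share of $o$ is therefore at least $y_o$, and it went to $a$ or to agents of degree $\le z_a$, by min-degree filling within the class. So
\[
  \alpha_o\ge 1-e^{z_a-\gamma},
\]
and together with $\beta_a\ge e^{z_a-\gamma}-e^{-\gamma}$ the sum is at least $1-e^{-\gamma}$.

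The cost condition holds because each unit of class-$i$ mass contributes at most $1$ to $\beta+\alpha$-weighted cost. The $\alpha_o$ charge is multiplied by $y_o\le$ (class-$i$ share of $o$), so it is paid for by actual class-$i$ mass.

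\textbf{Main obstacle.} The delicate step is making the $\alpha_o$ accounting consistent when $y_o$ exceeds the class-$i$ Phase-I share. This happens when class $j$ also received Phase-II mass, or when class $i$ left $Z_\gamma$ midway. I would handle it by defining $\alpha_o:=0$ unless $o$ falls in Case 2. In that case class $i$ never left $Z_\gamma$ during $o$, so no Phase-II allocation of $o$ occurred and $y_o\le x_{o,N_i}$ exactly. Items where class $i$ left $Z_\gamma$ have all of their class-$i$ neighbours at degree $\ge\gamma$, so they are covered by Case 1.
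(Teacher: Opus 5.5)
Your proof is correct, and it reaches the bound by a different route from the paper, although the two are LP-dual to each other. The paper never writes down a single dual solution. For each threshold $\theta\in[0,\gamma]$ it proves the crude bound $V_i^*(y_j(X))\le\int_0^\theta f(z)\,dz+f(\theta)$, where $f(z)$ is the number of class-$i$ agents of final degree at least $z$. Agents above $\theta$ count at most $1$ each, and agents below $\theta$ can only use items on which class $i$ received at least class $j$'s share, all of it at levels below $\theta$. It then integrates these bounds against $e^{\theta-\gamma}$ over $[0,\gamma]$, which after swapping integrals collapses to $\int_0^\gamma f\le V_i(X)$.

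Each per-$\theta$ bound is the integral cover $\beta_a=\mathbf{1}[\deg_X(a)\ge\theta]$, $\alpha_o=\mathbf{1}[o\text{ has a class-}i\text{ neighbour of degree}<\theta]$. Averaging these covers with density $e^{\theta-\gamma}$ gives exactly your $\beta_a=e^{\min(\deg_X(a),\gamma)-\gamma}-e^{-\gamma}$ and an $\alpha_o$ no larger than yours. Your online split of each unit of mass, $e^{z-\gamma}$ to the agent and $1-e^{z-\gamma}$ to the item, is the charging-rule form of that averaging. The paper's layer-cake version is shorter and avoids any bookkeeping about how $\alpha_o$ is normalised. Yours gives an explicit fractional certificate in the same primal-dual style as the paper's USW proof and Hybrid Ranking CEF proof, which makes the role of the kernel $e^{z-\gamma}$ transparent and easier to adapt.

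When writing it up, replace the garbled formula for $d\beta_a$ by simply $d\beta_a=e^{z-\gamma}\,dx$. Also state explicitly that an agent below level $\gamma$ receives mass only in Phase I, because in Phase II every agent liking $o$ is already at level at least $\gamma$; this is what gives the full $\beta_a=1-e^{-\gamma}$ in Case 1. Finally, if you literally increment $\alpha_o$ by $(1-e^{z-\gamma})\,dx/y_o$, then $y_o\alpha_o$ equals the charged mass exactly and your ``main obstacle'' disappears. The inequality $y_o\le\sum_{a\in N_i}x_{o,a}$ is then needed only in the Case 2 edge check, where you have already established it.
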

\begin{tproof}[Proof of Lemma~\ref{lem:eftt-cef}]
    Consider an arbitrary class $i$. Let $N_i(z)$ be the set of agents in class $i$ with fractional degree at least $z$. Let $f(z)=|N_i(z)|$. Let $\bar{N_i}(z)$ be the set of agents in class $i$ with fractional degree less than $z$. Observe that $V_i(x)=x(\delta(C_i))$, the fractional value of the matching assigned to class $i$, can now be written as:
    \begin{align*}
        V_i(x)=\int_0^1 f(x)dx.
    \end{align*}

    Now consider an arbitrary $j\neq i$ and an arbitrary $\theta\in [0,\gamma]$. For each object compatible with an agent $a\in \bar{N}_i(\theta)$, class $i$ will always be part of the set $Z_\gamma$ of classes with agents of fractional degree less than $\gamma$. Hence, we only execute the Equal Filling step for such objects, which ensures that the rate of increase for class $i$ is at least as large as the rate of increase for class $j$ for such objects.  
So these nodes must have contributed at most $\int_0^\theta f(z)dz$ to $V_i^*(y_j(X))$. Every node in $N_i(\theta)$ might contribute value at most $1$ to $V_i^*(y_j(X))$. So we have:
    \begin{align*}
        V_i^*(y_j(X))\leq \int_0^\theta f(z)dz + f(\theta).
    \end{align*}
    So, we get:
    \begin{align*}
        (1-\exp(-\gamma))\cdot V_i^*(y_j(X)) &= \int_0^\gamma \exp(\theta-\gamma)V_i^*(y_j(X))d\theta \\
        &\leq \int_0^\gamma \exp(\theta-\gamma)\left(\int_0^\theta f(z)dz + f(\theta)\right)d\theta.
    \end{align*}
    Swapping the order of integration in the first term, $\int_0^\gamma \exp(\theta-\gamma)\int_0^\theta f(z)dz\,d\theta = \int_0^\gamma f(z)\int_z^\gamma \exp(\theta-\gamma)d\theta\,dz = \int_0^\gamma f(z)(1-\exp(z-\gamma))dz$. The second term is $\int_0^\gamma \exp(\theta-\gamma)f(\theta)d\theta$. So the right-hand side equals
    \begin{align*}
        \int_0^\gamma f(\theta)(1-\exp(\theta-\gamma))d\theta + \int_0^\gamma f(\theta)\exp(\theta-\gamma)d\theta = \int_0^\gamma f(\theta)d\theta \leq \int_0^1 f(\theta)d\theta = V_i(x).
    \end{align*}
\end{tproof}

\subsection{Impossibility Result}
\label{subsec:impossibility}
We now present a new improved upper bound on the USW approximation guarantee of any non-wasteful $\alpha$-CEF algorithm for the divisible setting. This result relies on a careful construction of an adversarial instance that builds on two instances from the literature: the instance used by~\cite{hajiaghayi_fairness_2023} for showing an upper bound on the USW approximation guarantee of any non-wasteful $\alpha$-CEF algorithm, and the upper-triangular instance used by~\cite{karp1990optimal} for showing an upper bound on the approximation guarantee of any online algorithm for maximum matching. 
\begin{theorem}\label{thm:price-of-fairness}
    No randomized non-wasteful online algorithm with an $\alpha$-CEF guarantee for the divisible setting can achieve an approximation to the USW objective greater than $\frac{1 + \alpha - e^{\alpha-1}}{1 + \alpha} $. \label{thm:impossibility}
\end{theorem}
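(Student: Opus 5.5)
The plan is to build a single adversarial instance that combines the fairness-forcing gadget of~\cite{hajiaghayi_fairness_2023} with $k-1$ copies of the upper-triangular instance of~\cite{karp1990optimal}, and then let $k,n\to\infty$.

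\textbf{The instance.} Class $1$ has $n$ agents $u_1,\dots,u_n$, and each class $j\in\{2,\dots,k\}$ has $n$ agents $b^j_1,\dots,b^j_n$. In Phase~1 the adversary sends $n$ items $o_1,\dots,o_n$, where $o_t$ is liked by $u_t$ and by $b^j_t$ for every $j\ge 2$. By non-wastefulness all of these items are fully assigned. In Phase~2, for each $j\ge2$ independently, the adversary runs the upper-triangular instance on $b^j_1,\dots,b^j_n$. It sends $n$ items one by one; the $s$-th item is liked by the agents of class $j$ not yet ``eliminated'', and the eliminated agent is the one with the largest expected load. Phase~2 items are not liked by class~$1$. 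The offline optimum gives every Phase~1 item to class~$1$ and perfectly matches each triangular copy, so $\mathrm{OPT}=kn$.

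\textbf{Step 1: Phase~1 forces load onto classes $j\ge2$.} Let $p$ be the expected fraction of Phase~1 mass that goes to class~$1$. I would use $\alpha$-CEF for class~$1$ against each class $j$ (class~$1$ can extract from $y_j$ exactly the mass class $j$ received, since $u_t$ likes $o_t$). I would also use the symmetric constraints among the classes $j\ge 2$, after symmetrizing the algorithm over a random relabelling of classes and of indices $t$, justified by convexity. Together these should show the average load on the Phase~1 neighbours $b^j_t$ is at least roughly $\beta:=\alpha(1-p)\cdot\frac{1}{\dots}$, more precisely whatever is needed for the constraint of class $1$ envying the $j\ge2$ classes to bind. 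The pattern I expect is that either class~$1$ loses value in Phase~1, giving USW loss $(1-p)n$ relative to OPT, or the classes $j\ge 2$ carry load $\approx \alpha$ on every agent. I plan to choose the class sizes (possibly letting class~$1$ be larger, with weight $\lambda$) so that the binding trade-off is $p=\frac{1}{1+\alpha}$. This is the source of the $\frac{1}{1+\alpha}$ denominator in the target bound.

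\textbf{Step 2: triangular phase starting from uniform load.} For a class $j$ whose agents all start Phase~2 with load $\beta$, I will redo the KVV/water-filling computation. The adversary eliminates the most-loaded agent, so the best an online algorithm can do on the remaining capacity $1-\beta$ per agent is governed by the same differential equation as in the classical analysis, and its expected value is $n\bigl(1-\beta-e^{\beta-1}+o(1)\bigr)$. Here the $\beta$ is the head start, and $1-e^{\beta-1}$ replaces $1-1/e$. I will verify this by writing the standard potential argument from the Yao-principle form of the KVV bound, applied to the randomized algorithm's expected loads, which only need to be tracked in expectation. Adding Phase~1 and Phase~2 over all $k-1$ copies, dividing by $kn$, and letting $k\to\infty$ (so class~$1$'s Phase~1 contribution is negligible except through the constraint) should give $\frac{1+\alpha-e^{\alpha-1}}{1+\alpha}$ after substituting $\beta=\alpha$ and normalizing by the $(1+\alpha)$ weighting from Step~1.

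\textbf{Main obstacle.} The hard part is Step~1: calibrating the gadget and the class sizes so that the CEF constraint forces exactly load $\alpha$ onto the Phase-2 agents while costing the factor $1+\alpha$, all in expectation for randomized algorithms. The expected class value and the expected optimistic valuation interact with Jensen's inequality in the wrong direction. To handle this, I would first try to ensure that in my instance $V^*_1(y_j)$ is linear in $y_j$ because of the perfect-matching structure $o_t\leftrightarrow u_t$. Then expectations commute and the constraint becomes a linear inequality on expected loads.
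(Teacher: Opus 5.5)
\textbf{Step~1 fails on your instance for a reason of scale, not calibration.} Phase~1 has only $n$ items, but the classes $j\ge 2$ contain $(k-1)n$ agents. So whatever the algorithm does, their average Phase-1 load is at most $\frac{1}{k-1}$.

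The only constraint that pushes mass onto them is class $j$ envying class~$1$. It forces only average load $\bar\beta=\frac{\alpha}{1+(k-1)\alpha}\le\frac1k$. Even with the correct triangular bound, the best ratio your instance certifies is $\frac1k+\frac{k-1}{k}\bigl(1-e^{\bar\beta-1}\bigr)$. Since $\bar\beta\le\frac1k\le\ln\frac{k}{k-1}$, this is never below $1-\frac1e$, for any $k$.

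The constraint you plan to use, class~$1$ envying class~$j$, points the wrong way. It pushes mass toward class~$1$, which is what OPT does anyway. There is also no Phase-1 ``USW loss $(1-p)n$'': a non-wasteful algorithm matches every Phase-1 item to someone. The price of fairness shows up only as lost residual capacity in Phase~2.

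\textbf{The missing idea is the paper's asymmetric gadget.} Take $k-1$ small classes of $q$ agents and one large class of $q(k-1)$ agents. Phase~1 has $\tau=p(k-1)+q$ items with $p/q\le\alpha$, each liked by \emph{every} agent. OPT sends all of Phase~1 to the large class without touching Phase~2. But $q$ units in the large class's bundle are worth $q$ to \emph{every} small class at once. So either the large class holds less than $q$, and the small classes already hold more than $\tau-q=p(k-1)$, or each small class must hold at least $\alpha q\ge p$.

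This gives $\sum_{i<k}c_i(\tau)\ge p(k-1)$: load about $\alpha$ per small-class agent, with total Phase-1 mass comparable to the Phase-2 size. The denominator $1+\alpha$ is then just OPT per small-class agent ($\alpha$ from Phase~1 plus $1$ from Phase~2). It does not come from a trade-off $p=\frac{1}{1+\alpha}$.

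\textbf{Step~2 has further concrete errors.}
\begin{itemize}
\item The triangular adversary must eliminate the \emph{least}-loaded remaining agent. The paper's LP rests on $y_j\le z_j/(q-j+1)$, i.e.\ the frozen load is at most the average of the remaining loads. Under your rule (largest expected load), consider the algorithm that pours each item onto the fullest non-full remaining agent. It always has a full agent eliminated and uses all residual capacity, so no $e^{\beta-1}$ term ever appears.
\item Starting from uniform load $\beta$, water-filling gains $n(1-e^{\beta-1})$ in the triangle, for a class total of $n(\beta+1-e^{\beta-1})$. Your expression $n(1-\beta-e^{\beta-1})$ is wrong and is negative for $\beta\gtrsim 0.43$.
\item You cannot assume the Phase-1 loads are uniform. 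The paper avoids this because its LP involves only the class total $c_i(\tau)$, giving class value at most $c_i(\tau)+q\bigl(1-e^{c_i(\tau)/q-1}\bigr)+2$. Concavity in the $c_i(\tau)$, together with the lower bound on their sum, then finishes the argument.
\end{itemize}
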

\begin{tproof}[Proof of \cref{thm:impossibility}]
    Suppose that an algorithm, $\mathcal A$, is $\alpha$-CEF for some $\alpha \in (0,1)$. 
    Let $p$ and $q$ be coprime integers such that $ p/q \leq  \alpha$.
    The proof proceeds by considering an instance with $k-1$ classes of $q$ agents, as well as a $k$-th class comprised of $q(k-1)$ agents. 
    The adversarial input stream consists of two phases.
    In the first phase, $p(k-1) + q$ items arrive, each of which has incident edges to every agent in the graph, whereas in the second phase, $k-1$ instances of the $q$-sized upper-triangular construction due to~\cite{karp1990optimal} specific to the $k-1$ smaller classes arrive sequentially.
    Let $X^{(t)}$ be the fractional matching after the arrival of the $t$-th item and let $c_i(t)=\Exp[V_i(X^{(t)})]$. Let $\tau = p(k-1) + q$. 
    We first have the following claim for the given instance due to~\cite{hajiaghayi_fairness_2023}.
    \begin{claim} \label{claim:nw-bound}
        For any non-wasteful $\alpha$-CEF algorithm and $\tau = p(k-1) + q$, we must have that $$\sum_{i=1}^{k-1} c_i(\tau) \ge p(k-1)$$
    \end{claim}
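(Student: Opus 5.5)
The plan is to analyse only the first phase, in which every item is adjacent to every agent, and to combine non-wastefulness with the CEF constraints of the $k-1$ small classes against the large class $k$.

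\textbf{Step 1 (all first-phase items are fully assigned).} There are $q(k-1)$ agents in the small classes and $q(k-1)$ in class $k$. For $k\ge 2$ we have $\tau=p(k-1)+q\le 2q(k-1)$, since $p\le q$. While the first $\tau$ items arrive, the total load is below $\tau$, so some agent is still unsaturated, and every agent likes every item. Non-wastefulness therefore forces every item to be fully assigned, so $\sum_{i=1}^k V_i(X^{(\tau)})=\tau$ on every run. Write $S=\sum_{j<k}V_j(X^{(\tau)})$; then $V_k=\tau-S$. Since each small class has $q$ agents, $0\le V_j\le q$ and $0\le S\le q(k-1)$.

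\textbf{Step 2 (evaluate the optimistic valuations).} Fix a small class $j$. All $q$ agents of class $j$ like all items, so $\Vstar{j}{y_k(X)}$ is a maximum fractional matching with agent capacities $1$ and item capacities $y_k(X)_o$. Its value is therefore $\min\bigl(q,\sum_o y_k(X)_o\bigr)=\min(q,V_k)=\min(q,\tau-S)$. The randomized $\alpha$-CEF condition with $\alpha\ge p/q$ then gives
\[
c_j(\tau)\ \ge\ \tfrac{p}{q}\,\EE\bigl[\min(q,\tau-S)\bigr]\qquad\text{for each } j<k,
\]
and summing over $j$ yields $\EE[S]\ge \tfrac{p(k-1)}{q}\,\EE[\min(q,\tau-S)]$.

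\textbf{Step 3 (pass from the min to a bound on $\EE[S]$).} Note that $\min(q,\tau-S)=q-(S-p(k-1))^+$. In a deterministic run, or whenever $S\le p(k-1)$, the conclusion is immediate: if $S<p(k-1)$ then $V_k>q$, so CEF gives $S\ge \tfrac pq (k-1)q=p(k-1)$, a contradiction. The randomized case is the main obstacle, because $\EE[\min(q,\tau-S)]$ can fall below $q$ on runs where $S$ exceeds $p(k-1)$. Those runs, however, already contribute a large $S$.

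To handle this, I would use the pointwise bound $(S-p(k-1))^+\le S\,(q-p)/q$, which holds because $S\le q(k-1)$. Combining it with the Step 2 inequality gives a self-referential inequality in $\EE[S]$.

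If the resulting constant falls short of exactly $p(k-1)$, I would refine the argument in one of two ways:
\begin{itemize}
\item also invoke CEF of class $k$ toward the small classes, namely $\tau-S\ge \alpha\,V_j$;
\item use the convexity of $h(S)=S-\tfrac{p(k-1)}{q}\min(q,\tau-S)$, together with $h(p(k-1))=0$, to argue that any mass with $S>p(k-1)$ in fact raises $\EE[S]$ at least as much as it lowers the right-hand side.
\end{itemize}
Failing both, I would accept an additive $o(k)$ loss, which is harmless for the subsequent limit $k\to\infty$ in the proof of \cref{thm:impossibility}.
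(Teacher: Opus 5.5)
\textbf{Verdict: there is a genuine gap.} Your argument proves the claim for deterministic algorithms but not for the randomized, in-expectation CEF that Theorem~\ref{thm:impossibility} needs. The paper gives no proof to compare against; it imports the claim from \cite{hajiaghayi_fairness_2023}.

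Steps 1 and 2 are correct. Applying CEF at time $\tau$ is legitimate because the instance can be truncated there without changing the online algorithm's behaviour. The first part of Step 3 then completes the proof for deterministic algorithms, or whenever CEF holds on every run. The gap is the rest of Step 3, and none of your repairs closes it:
\begin{itemize}
\item The pointwise bound $(S-p(k-1))^+\le S(q-p)/q$ only yields $\EE[S]\ge \frac{p(k-1)q^2}{q^2+p(k-1)(q-p)}$. This stays below $q^2/(q-p)$ for every $k$.
\item The convexity idea points the wrong way. $h$ has slope $1$ below $p(k-1)$ and slope $1+p(k-1)/q$ above it, so Jensen gives $\EE[h(S)]\ge h(\EE[S])$, which says nothing about $h(\EE[S])\ge 0$. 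Pushing a run above $p(k-1)$ lowers the envy term $\tfrac{p(k-1)}{q}\min(q,\tau-S)$ about $p(k-1)/q$ times faster than it raises $\EE[S]$. That is exactly what lets $\EE[h(S)]\ge 0$ coexist with $\EE[S]<p(k-1)$.
\item Class $k$'s constraint $\EE[\tau-S]\ge\alpha\,\EE[V_j]$ is far too weak to compensate.
\end{itemize}

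In fact the time-$\tau$ inequalities cannot yield the claim at all. Suppose that with probability $\pi=\frac{p(k-1)}{2p(k-1)+q}$ the algorithm spreads all $\tau$ first-phase items evenly over the small classes, and otherwise gives all of them to class $k$. Both branches are feasible and non-wasteful once $k-1\ge q/(q-p)$. Then:
\begin{itemize}
\item for every small class $j$, $\EE[V_j]=\pi\tau/(k-1)=p(1-\pi)=\frac pq\,\EE[\min(q,\tau-S)]$, so the constraint holds with equality;
\item constraints among small classes hold by symmetry, and class $k$'s constraint holds easily;
\item yet $\EE[S]=\frac{p(k-1)(p(k-1)+q)}{2p(k-1)+q}\approx \tfrac12 p(k-1)$.
\end{itemize}

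This deficit is $\Theta(k)$, not $o(k)$, so your fallback is unavailable. The proof of Theorem~\ref{thm:impossibility} normalises by $\Theta(k)$, so such a loss would replace $\alpha$ by a smaller constant in the final bound.

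Adding the CEF constraints at all earlier truncation times $t<\tau$ does not rescue the argument. Mix two runs:
\begin{itemize}
\item with probability about $\tfrac12$, the all-to-small-classes run;
\item otherwise, a run that splits each item in a fixed proportion until class $k$ holds $q$ units, and feeds class $k$ thereafter.
\end{itemize}
With the proportion chosen so each prefix constraint is tight, this satisfies every prefix constraint while $\EE[S]\approx\tfrac34 p(k-1)$.

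So the randomized claim needs an ingredient beyond the first-phase CEF inequalities, for example constraints coming from other continuations of the instance. Otherwise it must be restricted to deterministic or ex-post CEF algorithms, where your argument suffices.
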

    We now describe the second phase of the instance in more detail and focus on any class $\class{i}$.
    Let $\class{i}^0:=N_i$ be the set of all agents in class $i$.
    In the second phase $q$ items will arrive which connect to the agents of $\class{i}^0$.  
    The first such item will connect to \emph{all} agents in $\class{i}^0$. 
    Now let $\class{i}^1= \class{i}^0 \setminus \{\arg\min_{a\in \class{i}^0} \Exp[V_a(X^{(\tau + 1)})]\}$. 
    Then the second item will connect to all nodes in $\class{i}^1$, and so on with $\class{i}^j = \class{i}^{j-1} \setminus \{\arg\min_{a\in \class{i}^{j-1}} \Exp[V_a(X^{(\tau + j)})]\}$. 
    The $j$-th item in the second phase will connect to all nodes in $\class{i}^{j-1}$.

    For the purpose of analysis, define $y_j:= \min_{a \in \class{i}^{j-1}}  \Exp[V_a(X^{(\tau + j)})]$.
    Let $z_j := \sum_{a \in \class{i}^{j-1}}  \Exp[V_a(X^{(\tau + j)})]$. 
    Observe that the total expected value of the matching of class $i$ is $\sum_{j=1}^{q} y_j$.
    Since the minimum is smaller than the average, we have $y_j \leq \frac{z_{j}}{|\class{i}^{j-1}|}$.
    Now observe that
    \begin{align*}
        z_{j} \leq c_i(\tau)+j-\sum_{s=1}^{j-1}y_s.
    \end{align*}
    and furthermore,
    \begin{align}
        y_{j} \leq \frac{z_j}{|\class{i}^{j-1}|}=\frac{z_j}{q-j+1}. \label{eq:y_eq}
    \end{align}
    
    In other words, $z_j$ and $y_j$ are a feasible solution to the following linear program with objective value equal to the total value of the fractional matching gained by the algorithm in both phases:

    \begin{center}
    \begin{minipage}[t]{0.45\textwidth}
    \centering
    \textbf{Primal}
    \begin{align*}
      \max \quad
        & \sum_{j=1}^{q} y_j \\
      \text{s.t.} \quad
        & \sum_{s=1}^{j-1} y_s + z_j \;\leq\; c_i(\tau) + j
          && \forall\, j \in [q], \\
        & y_j - \frac{z_j}{q - j + 1} \;\leq\; 0
          && \forall\, j \in [q], \\
        & y_j \;\leq\; 1
          && \forall\, j \in [q], \\
        & z_j \;\geq\; 0,\quad y_j \;\geq\; 0
          && \forall\, j \in [q].
    \end{align*}
    \end{minipage}\hspace{1em} \vrule \hspace{1em}
    \begin{minipage}[t]{0.45\textwidth}
    \centering
    \textbf{Dual}
    
    \begin{align*}
      \min \quad
        & \sum_{j=1}^{q} \bigl( (c_i(\tau) + j)\,\pi_j + \lambda_j \bigr) \\
      \text{s.t.} \quad
        & \sum_{j=s+1}^{q} \pi_j + \rho_s + \lambda_s \;\geq\; 1
          && \forall\, s \in [q], \\
        & \pi_s - \frac{\rho_s}{q - s + 1} \;\geq\; 0
          && \forall\, s \in [q], \\
        & \pi_s,\, \rho_s,\, \lambda_s \;\geq\; 0
          && \forall\, s \in [q].
    \end{align*}
    \end{minipage}
    \end{center}    
    Let $\psi = \lceil{q(1-\exp(c_i(\tau)/q - 1))}+1\rceil$. Note that $\psi \leq q$.
We show that $\rho_j = \frac{q-\psi}{q-j}\mathbf{1}_{j\leq \psi}$, $\pi_j=\rho_j - \rho_{j-1} = \frac{q-\psi}{(q-j)(q-j+1)}\mathbf{1}_{j\leq \psi}$ and $\lambda_s = \mathbf{1}_{s> \psi}$ is a feasible solution to the dual LP. 
    To check the first dual constraint, observe that for $s >\psi$ the constraint is trivially satisfied because $\lambda_s=1$ in that case. So, consider the case $s\leq \psi$. Here we have:
    \begin{align*}
        \sum_{j=s+1}^{q} \pi_j + \rho_s + \lambda_s &\geq \sum_{j=s+1}^{\psi}(\rho_{j}-\rho_{j-1}) + \rho_s =\rho_{\psi}=1.
    \end{align*}
    The second constraint is satisfied with equality for all $s\in [q]$. Hence, the dual solution is feasible. The objective value of the dual solution is:
    \begin{align*}
        \sum_{j=1}^{\psi} \left((c_i(\tau) + j) (\rho_j - \rho_{j-1})\right) + (q - \psi) &=\sum_{j=0}^{\psi - 1}(\rho_\psi -\rho_j) + c_i(\tau)(\rho_\psi-\rho_0)+(q-\psi)\\
        &= -\sum_{j=0}^{\psi - 1} \frac{q - \psi}{q - j}+ c_i(\tau) \psi/q  + q\\
        &\leq -\int_0^{\psi-1} \frac{q - \psi}{q - x} dx+ c_i(\tau) \psi/q  + q\\
        &= -(q - \psi) (\log(q) - \log(q - \psi + 1)) + c_i(\tau) \psi/q  + q\\
        &= (q - \psi) (\log(1 - \psi/q + 1/q)) + c_i(\tau) \psi/q  + q\\
        &\leq (q - \psi) (\log(\exp(c_i(\tau)/q - 1))) + c_i(\tau) \psi/q  + q\\
        &\leq (q - \psi) (c_i(\tau)/q-1) + c_i(\tau) \psi/q  + q\\
        &=c_i(\tau) + \psi.
    \end{align*}
    Hence, the objective value of the dual program is at most $c_i(\tau)  + \psi \leq c_i(\tau) + q(1 - \exp(c_i(\tau)/q - 1)) + 2$. 
    By weak duality, this also is an upper bound on the value of the primal program. 
    Summing over all classes $i\in [k]$ gives:
    \begin{align*}
        &\left(p(k - 1) + q - \sum_{i=1}^{k-1} c_i(\tau)\right) + \sum_{i=1}^{k-1} \left(c_i(\tau) + q(1 -\exp(c_i(\tau)/q - 1)) + 2\right)\\
        &= p(k - 1) + q  + \sum_{i=1}^{k-1} \left(q(1 - \exp(c_i(\tau)/q-1)) + 2\right)
    \end{align*}
    Let $\Sigma = \sum_{i=1}^{k-1} c_i(\tau)$. 
    Since the above function is symmetric and concave in the $c_i(\tau)$'s, it is maximized when all $c_i(\tau)$'s are equal. 
    So, we can set $c_i(\tau) = \frac{\Sigma}{k-1}$ for all $i\in [k-1]$. 
    By Claim~\ref{claim:nw-bound} we know that $\Sigma \geq p(k - 1)$. 
    This gives the following upper bound on the value of the matching:
    \begin{align*}
        p(k-1)+q  + \sum_{i=1}^{k-1} \left(q(1-\exp(p/q-1)) + 2\right)
    \end{align*}
    This makes the competitive ratio at most:
    \begin{align*}
        \frac{p(k-1)+q  + (k-1) \left(q(1-\exp(p/q-1)) + 2\right)}{p(k-1) + q + (k-1)q}=\frac{\frac{p}{q} +  \frac{1}{k-1}+(1-\exp(p/q-1)) + \frac{2}{q}}{\frac{p}{q}+\frac{1}{k-1}+1}
    \end{align*}
    Letting $k \to \infty$ gives
    \begin{align*}
        \frac{\frac{p}{q} +(1-\exp(p/q-1)) + \frac{2}{q}}{\frac{p}{q}+1}.
    \end{align*}
    and subsequently taking $q\to \infty$ and $p/q\to \alpha$ from below yields the final result.
\end{tproof}

\section{The Price of Fairness in Indivisible Matching} \label{sec:indivis}
We now consider the more challenging problem setting of \emph{indivisible} matchings.

\subsection{Hybrid Ranking}
Our main algorithmic result is formalized as follows.

\begin{theorem} \label{thm:indivisible-alg}
    For every $\gamma \in [0,1]$, there exist an online indivisible algorithm that achieves $\left( 1 - \frac{e^{\gamma-1}}{\gamma+1} \right)$-USW and $\frac{\gamma}{2}$-CEF guarantees simultaneously.
\end{theorem}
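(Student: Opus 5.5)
We will design Hybrid Ranking as a randomized integral analogue of EFTT and then prove the two guarantees separately: a randomized primal-dual argument for USW and a charging argument for CEF.

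\textbf{The algorithm.} Each agent $a$ independently draws a rank $u_a \in [0,1]$ uniformly at random. Agents with $u_a \le \gamma$ are \emph{fair-phase} agents and the rest are \emph{efficiency-phase} agents. When item $o$ arrives, let $Z$ be the set of classes that contain an unmatched fair-phase neighbour of $o$. If $Z \neq \emptyset$, we pick a class in $Z$, and within it we match $o$ to the unmatched fair-phase neighbour of smallest rank. Otherwise we match $o$ to the unmatched neighbour of smallest rank overall, as in RANKING. The class in $Z$ is chosen by a deterministic round-robin over classes (say, the class in $Z$ that has so far received the fewest items while in $Z$). The algorithm is non-wasteful by construction.

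\textbf{USW.} We follow the randomized primal-dual analysis of RANKING (Devanur--Jain--Kleinberg style) with the gain-sharing function $g$ from Lemma~\ref{lem:eftt-usw}. When $o$ is matched to $a$, we set $\alpha_a = g(u_a)$ and $\beta_o = 1-g(u_a)$. Fix an edge $(o,a)$ and the ranks of all other agents, and vary $u_a$. Let $a$'s critical rank $\theta$ be defined as usual, from the run with $a$ removed.

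If $\theta > \gamma$, then $o$ was matched in the efficiency phase. The standard monotonicity argument gives $\beta_o \ge 1-g(\theta)$, and $a$ is matched whenever $u_a<\theta$. Hence $\mathbb E[\alpha_a+\beta_o] \ge \int_0^\theta g + 1-g(\theta) = \delta$, which is exactly the computation in Lemma~\ref{lem:eftt-usw}.

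If $\theta \le \gamma$, then $o$ went to a fair-phase agent, so $\beta_o = 1-g(\gamma) = \delta$ because $g$ is constant on $[0,\gamma]$.

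\textbf{Main obstacle for USW.} The delicate step is the monotonicity and alternating-path property in the presence of the class-balancing rule. Removing or inserting $a$ can change which classes are in $Z$, and so it can reroute items across classes. To fix this, I would make the class choice depend only on the set of unmatched fair-phase agents, and then re-prove the alternating-path lemma for this hybrid rule. If that fails, I would fall back to a version where the fair phase picks the class uniformly at random. That version still gives $\beta_o \ge \delta$ in the case $\theta\le\gamma$ without needing monotonicity there.

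\textbf{CEF.} Fix classes $i$ and $j$, and let $S \subseteq Y_j$ be the items realizing $\Vstar{i}{Y_j}$, matched to distinct agents of $N_i$. We split $S$ by the agent $a$ it is paired with.

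If $a$ is a fair-phase agent and was still unmatched when its item arrived, then class $i$ was in $Z$ at that time. Round-robin then ensures that class $i$ received, up to an additive one per class, at least as many fair-phase items as $j$ did during those steps. The rest of $S$ is charged to agents of $N_i$ that are matched.

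This yields $V_i \ge \Pr[u_a\le\gamma]\cdot$(unmatched-type contributions) $+$ (matched contributions), which should give $\mathbb E[V_i] \ge \tfrac{\gamma}{2}\,\mathbb E[\Vstar{i}{Y_j}]$. The factor $\tfrac{1}{2}$ comes from the standard maximal-matching charging, as in the $\tfrac12$-CEF result of~\cite{hosseini_class_2024}. The factor $\gamma$ is the probability that a fixed agent is fair-phase.
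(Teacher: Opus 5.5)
Your two-phase architecture and the gain-sharing function $g$ match the paper's. However, there is a genuine gap, and the algorithm as you specify it is in fact not $\tfrac{\gamma}{2}$-CEF: the deterministic round-robin class choice breaks it. Take two classes with one agent each, $a\in N_i$ and $b\in N_j$, and a single item $o$ liked by both. Some class must lose the initial tie; say $i$ does. If both agents are fair-phase, $o$ goes to $b$; if exactly one is, it goes to that one; otherwise RANKING splits it evenly. So $\mathbb{E}[V_i(X)]=\tfrac{1-\gamma^2}{2}$ while $\mathbb{E}[V_i^*(Y_j(X))]=\Pr[o\in Y_j(X)]=\tfrac{1+\gamma^2}{2}$. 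The ratio $\tfrac{1-\gamma^2}{1+\gamma^2}$ is $0$ at $\gamma=1$ and below $\tfrac{\gamma}{2}$ for all $\gamma$ above roughly $0.7$. The paper instead selects a class \emph{uniformly at random} from $Z$, and inside it the first unmatched fair-phase agent in a \emph{fixed} order (not smallest rank).

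Your charging would not close even with that change, for three reasons. First, an additive ``up to one per class'' slack cannot give a multiplicative guarantee, and this is exactly what the example exploits. Second, items of $Y_j(X)$ whose optimistic partner in $N_i$ is an efficiency-phase agent left unmatched are covered neither by the round-robin count nor by a matched class-$i$ agent. Third, the factor $\Pr[u_a\le\gamma]$ cannot simply be pulled out, because the optimal set $S$ depends on the ranks.

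The missing idea is the paper's marking dual with deferred decisions. View the uniform choice as a fresh random class order per item. When class $i$ is reached, one marks compatible class-$i$ agents in the fixed order until an unmatched fair-phase one is found. When class $j$ is reached while some compatible class-$i$ agent is still unmarked, one marks $o$. This covers every edge between $Y_j(X)$ and $N_i$ in the LP relaxation of $V_i^*(Y_j(X))$. Since a class-$i$ rank is consulted only after its agent is marked, the $z$ unmarked compatible agents still have fresh ranks. Let $p$ be the probability that $i$ precedes every other class having a free compatible fair-phase agent. Then per item, class $i$ gains at least $p(1-(1-\gamma)^z)$, agent marks cost at most $p(1-(1-\gamma)^z)/\gamma$, and the item mark costs at most $p$ (nothing if $z=0$). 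Using $1-(1-\gamma)^z\ge\gamma$ for $z\ge 1$ gives the factor $\tfrac{\gamma}{2}$.

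On USW, the obstacle you flag is real for round-robin. Class counters are global state, so moving $a$ into the fair phase changes choices for later items not adjacent to $a$. For example, let $o_1$ be liked by $a\in N_1$ and $c\in N_3$, then $o_2$ by $d\in N_1$ and $b\in N_2$, then $o$ by $a$ and $b$, with $b,c,d$ fair-phase and ties broken toward the lower index. For $u_a>\gamma$, $o$ goes to $b$ and $\alpha_a+\beta_o=\delta$. For $u_a\le\gamma$, both $a$ and $b$ are taken before $o$ arrives, $o$ stays unmatched, and $\alpha_a+\beta_o=1-\delta$. Hence $\mathbb{E}_{u_a}[\alpha_a+\beta_o]=\gamma(1-\delta)+(1-\gamma)\delta<\delta$ for $\gamma\in(0,1)$, so your case $\theta\le\gamma$ fails.

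Your fallback remark that the uniform rule needs no monotonicity is also not right. What makes it work is the same per-item class order. Coupling these orders across values of $u_a$ gives every item a history-independent preference list over fair-phase agents. Inserting $a$ into the fair phase then keeps the set of free fair-phase agents equal to the original set plus at most one agent at every step. That is what guarantees $o$ stays matched in the fair phase when $\theta\le\gamma$, and that $a$ is matched whenever $u_a\le\gamma<\theta$.
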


We consider the following algorithm that allows us to trade off between USW-competitiveness and CEF-guarantees. We call this approach \emph{hybrid ranking}.
Our approach is inspired by the Equal-Filling-Till-Threshold algorithm and is parameterized by a threshold $\gamma \in [0,1]$.
By decreasing the threshold, we increase the USW-competitiveness of the algorithm, while decreasing CEF-guarantees. 

This approach works as follows: Assume our set of agents $\agents$ is ordered. 
For agent $a\in \agents$ sample a random value $\mu_a$ uniformly from $[0,1]$.
Let $\agents'$ be the set of agents with $\mu_a \le \gamma$. 
On the arrival of an item $o$, we proceed as follows:
\begin{enumerate}
    \item If $o$ is compatible with some unmatched agents in $\agents'$: let $Z$ be the set of classes that contain such agents. We uniformly at random select a class $i$ from $Z$ and match $o$ to the unmatched agent in class $i$ that is in $N'$ and occurs first in the ordering of $\agents$.
    \item Otherwise, if $o$ is compatible with some unmatched agents in $\agents\setminus \agents'$: we match $o$ to the unmatched agent in $\agents\setminus \agents'$ with the lowest value of $\mu_a$.
\end{enumerate}

\begin{lemma}
    The above algorithm is $\left( 1 - \frac{e^{\gamma-1}}{\gamma + 1} \right)$-USW. \label{lem:hybrid-usw}
\end{lemma}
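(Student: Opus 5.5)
We will mirror the primal-dual proof of Lemma~\ref{lem:eftt-usw}, now in the randomized-primal-dual style of Devanur, Jain and Kleinberg for RANKING. Set $\delta = 1-\frac{e^{\gamma-1}}{\gamma+1}$ and reuse the function $g$ from that proof, with $g(z)=\frac{e^{\gamma-1}}{\gamma+1}$ for $z\le\gamma$ and $g(z)=e^{z-1}$ for $z>\gamma$. Whenever the algorithm matches item $o$ to agent $a$, we set $y_a = g(\mu_a)$ and $y_o = 1-g(\mu_a)$. Thus $\sum_v y_v$ equals the matching size pointwise, and $y\ge 0$. The goal is to show $\EE[y_o+y_a]\ge\delta$ for every edge $(o,a)\in E$. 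Then $\EE[y]/\delta$ is dual feasible, and weak duality gives the USW bound exactly as in Lemma~\ref{lem:eftt-usw}.

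To bound $\EE[y_o+y_a]$, fix the edge $(o,a)$ and all random values $\mu_{-a}$ except $\mu_a$. We also fix all random class choices, or couple them appropriately. Let $G_{-a}$ denote the run of the algorithm with agent $a$ deleted.

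\textbf{Case $\mu_a>\gamma$.} Here $a$ behaves like an ordinary RANKING vertex among $N\setminus N'$. The standard monotonicity/alternating-path argument for RANKING should go through with Phase~1 acting as a fixed preprocessing that does not depend on $\mu_a$. Define the critical value $\mu^c$ as the rank of $o$'s partner in $G_{-a}$, or $1$ if $o$ is unmatched there. Then for every $\mu_a<\mu^c$ (with $\mu_a>\gamma$), agent $a$ is matched. Whenever $a$ is present, $o$ is matched to an agent of value at most $\mu^c$, so $y_o\ge 1-g(\mu^c)$. The contribution is then
\[
\int_\gamma^{\max(\mu^c,\gamma)} g(z)\,dz + (1-\gamma)\bigl(1-g(\mu^c)\bigr)\ \text{-type terms}.
\]

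\textbf{Case $\mu_a\le\gamma$.} Here $a\in N'$. We must argue that either $a$ gets matched, giving $y_a=g(\gamma)$, or $o$ is matched inside Phase~1. In the latter situation $o$ goes to some agent with $\mu\le\gamma$, giving $y_o = 1-g(\gamma)$. Indeed, if $a\in N'$ is unmatched when $o$ arrives, rule~1 applies, so $o$ is matched to an agent in $N'$. Thus $y_a+y_o\ge \min\{g(\gamma),1-g(\gamma)\}$ is too weak, and the coupling must be done more carefully. If $a$ stays unmatched, then $y_o=1-g(\gamma)=\delta$ and the case is fine. If $a$ is matched, $y_a=g(\gamma)$ alone is small, so we also need a lower bound on $y_o$. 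We would handle this the way Lemma~\ref{lem:eftt-usw} does: $g$ is constant on $[0,\gamma]$, so the Phase~1 contribution integrates to $\gamma g(\gamma)$. Combining over $\mu_a\in[0,\gamma]$ with the $y_o$ bound from $G_{-a}$, where $o$'s partner has value at most $\mu^c$ in all runs by monotonicity, the total becomes
\[
\int_0^{\mu^c} g + (1-g(\mu^c)) \ \ge\ \delta .
\]
This is exactly the identity verified in Lemma~\ref{lem:eftt-usw}. The case where $o$ is unmatched in $G_{-a}$ gives $\int_0^1 g=\delta$.

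The main obstacle is the monotonicity/critical-threshold lemma in the presence of Phase~1. Moving $a$ into or out of $N'$ changes the set $Z$ of eligible classes and hence the random class choices. This may cascade differently from classical RANKING. We would handle it by fixing the randomness per item as a uniformly random permutation of the classes, with the first class in $Z$ chosen. We would then prove, by the usual alternating-path induction, that removing $a$ changes the matching along a single alternating path along which partners' $\mu$-values only decrease. If this coupling fails, we would instead lower-bound $\EE[y_o]$ using only that $o$ is always matched to some agent with $\mu\le\max(\mu^c,\gamma)$, which suffices because $g$ is flat below $\gamma$.
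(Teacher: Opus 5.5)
Your framework is the paper's: the same $g$, the same charging $y_a=g(\mu_a)$, $y_o=1-g(\mu_a)$, conditioning on $\mu_{-a}$, a critical value $\mu^c$ that plays the role of the paper's $\theta$, and the identity $\int_0^{\mu^c}g+1-g(\mu^c)=\delta$. Your per-item random class permutation is the right coupling, which the paper leaves implicit. Your observation that for $\mu_a>\gamma$ the Phase~1 decisions are literally unchanged is also correct.

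The gap is in the case $\mu_a\le\gamma$. You keep open the branch in which $a$ ends unmatched and call it fine because then $y_o=\delta$. You then average over $[0,\gamma]$ as if $y_a=g(\gamma)$ throughout, and these two cannot be mixed. Suppose $\mu^c>\gamma$ and let $U\subseteq[0,\gamma]$ be the set of values of $\mu_a$ for which $a$ ends unmatched. Your accounting then gives only $\delta+|U|\,\bigl(g(\mu^c)-2g(\gamma)\bigr)$. This is below $\delta$ when $\gamma<1$ and $\mu^c$ is slightly above $\gamma$, so nothing closes unless $U=\emptyset$.

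You get $U=\emptyset$ by splitting on $\mu^c$ rather than on $\mu_a$, as the paper does.
\begin{itemize}
\item If $\mu^c\le\gamma$: for every $\mu_a$, item $o$ receives a partner weakly better in its own preference list than its $G_{-a}$ partner. That partner is therefore an agent of $N'$, so $y_o=\delta$ pointwise and no $y_a$ term is needed.
\item If $\mu^c>\gamma$ and $a$ ended unmatched: the run would coincide with $G_{-a}$ on all other agents, and there $o$ found no free compatible agent of $N'$. For $\mu_a\le\gamma$, $a$ would be the unique free compatible agent of $N'$ when $o$ arrives, so rule~1 assigns $o$ to $a$, a contradiction. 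For $\gamma<\mu_a<\mu^c$ the usual RANKING argument applies.
\end{itemize}
Hence $a$ is matched for every $\mu_a<\mu^c$, and only then is the integral bound legitimate.

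Relatedly, the alternating-path statement you plan to prove, that partners' $\mu$-values only decrease, is false inside $N'$. There, priority comes from the class permutation and the fixed agent order, not from $\mu$. What does hold, for greedy with arbitrary item-specific strict preference lists, is the following. The free set with $a$ present equals the free set in $G_{-a}$ plus at most one extra agent. Every item's partner weakly improves in that item's own list. If $a$ is never matched, the two runs coincide. Every agent of $N'$ precedes every agent of $N\setminus N'$ in each list, and $g$ is constant on $[0,\gamma]$. So this lemma gives $g(\mu)\le g(\mu^c)$ for $o$'s partner, which is all you need. Your ``fallback'' is therefore not coupling-free: it is exactly this lemma.
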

\begin{tproof}[Proof of \cref{lem:hybrid-usw}]
    Let $\delta = 1-\frac{e^{\gamma-1}}{\gamma + 1}$. 
    The guarantee is proved via the standard fractional bipartite matching linear program (see Lemma~\ref{lem:eftt-usw}).

    It suffices to construct a $\delta$-approximate dual solution: a vector $y \in \mathbb{R}^V_{\ge 0}$ with $\sum_{v \in V} y_v = \sum_{e \in E} \Exp[x_e]$ and
    \begin{align*}
        \sum_{v \in e} y_v \ge \delta \quad \forall e \in E.
    \end{align*}
    We will construct this probabilistically: Start with $y_v' = 0$ for all $v\in \agents \cup \items$. 
    Define the following nondecreasing function $g$:
    \begin{align*}
        g(z)=\begin{cases}
            \frac{\exp(\gamma-1)}{\gamma +1}& \text{if } z\leq \gamma,\\
            \exp(z-1) & \text{otherwise}.
        \end{cases}
    \end{align*}
    Whenever we match an item $o$ to an agent $a$, we set $y_a':=g(\mu_a)$ and $y_o':=1-g(\mu_a)$. Note that this way we will always have $\sum_{v\in V}y_v' = \sum_{e\in E}x_e$, since we only assign dual mass when we match an edge, and the total dual mass assigned to the endpoints of a matched edge is always $1$. 
Also, observe that for $\mu_a\in [0,1]$ we have $g(\mu_a)\in [0,1]$, so that each component of $y$ is non-negative.

    Now, we set $y_v = \Exp[y_v']$ for each $v\in V$. We will show that this is a $\delta$-approximate dual solution. Consider an arbitrary edge $e=(o,a)$. We will condition on the value of $\mu_{a'}$ for all agents $a'\neq a$ and denote this conditioning by $\mathcal{F}_{-a}$.
    Let $\theta$ be the threshold such that $o$ would be matched to some agent with $\mu_{a'} = \theta$ if $\mu_a$ were set to 1.
If $o$ is unmatched in that case, set $\theta:= 1$.  
    If $\theta \leq \gamma$, then we must have:
    \begin{align*}
        \Exp[y'_a+y'_o|\mathcal{F}_{-a}]\geq \Exp[y'_o|\mathcal{F}_{-a}] \geq 1-g(\theta) = 1-\frac{\exp(\gamma-1)}{\gamma + 1}=\delta.
    \end{align*}

    If $\theta > \gamma$, then  $a$ would be matched as long as $\mu_a \leq \theta$.
    So, we have:
    \begin{align*}
        \Exp[y'_a|\mathcal{F}_{-a}]\geq \int_0^\theta g(z)dz .
    \end{align*}
    If we have $\theta < 1 $, then $o$ will always be matched to some agent $a'$ with $\mu_{a'}\leq \theta$. Furthermore, we have $g(1)=1$, so that:
    \begin{align*}
        \Exp[y'_o|\mathcal{F}_{-a}] \geq 1-g(\theta).
    \end{align*}
    Combining the above, we get for $\theta \geq \gamma$.
    \begin{align*}
        \Exp[y'_a+y'_o|\mathcal{F}_{-a}] &\geq 1-g(\theta) + \int_0^\theta g(z)dz \\&=  1 - \exp(\theta-1) + \gamma \cdot \frac{\exp(\gamma-1)}{\gamma +1} + \exp(\theta-1) - \exp(\gamma-1) \\
        &= 1 + \exp(\gamma -1) \cdot (\frac{\gamma}{\gamma + 1} -1)=\delta .
    \end{align*}
So in all cases, we have $\Exp[y'_a+y'_o|\mathcal{F}_{-a}] \geq \delta$. By taking the expectation over $\mathcal{F}_{-a}$, we get $y_a+y_o \geq \delta$ as desired.
\end{tproof}

\begin{lemma}
    The above algorithm is $\frac{\gamma}{2}$-CEF. 
    \label{lem:hybrid-cef}
\end{lemma}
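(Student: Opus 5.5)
The plan is to establish $\Exp[V_i(X)] \ge \frac{\gamma}{2}\,\Exp[\Vstar{i}{Y_j(X)}]$ for every pair of classes $i,j$ in two steps. First, I would show a factor-$2$ bound that uses only the agents of class $i$ lying in $\agents'$. Second, I would lose a factor $\gamma$ when passing from $\agents' \cap \class{i}$ to all of $\class{i}$. Write $\class{i}' = \class{i} \cap \agents'$. Let $V^*_{i'}(S)$ denote the size of a maximum matching between $\class{i}'$ and a set of items $S$.

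\emph{Step 1 (factor $2$ on $\class{i}'$).} Fix the random values $\mu$, so that $\agents'$ is fixed, and split the items of $Y_j$ into two groups. An item $o \in Y_j$ goes into $A$ if, at its arrival, some unmatched agent of $\class{i}'$ likes $o$; otherwise it goes into $B$. Take a maximum matching between $\class{i}'$ and $Y_j$. Every item of $B$ used by it is matched to an agent of $\class{i}'$ that was already matched by the algorithm when $o$ arrived. Hence at most $V_i(X)$ such items are used, and
\begin{align*}
V^*_{i'}(Y_j) \le |A \cap Y_j| + V_i(X).
\end{align*}
For an item in $A$, class $i$ belongs to the set $Z$ of step 1 of the algorithm. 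The item therefore goes to class $i$ with the same conditional probability, $1/|Z|$, as to class $j$, given the history. Summing over arrivals gives $\Exp[|A\cap Y_j| \mid \mu] \le \Exp[|A \cap Y_i| \mid \mu] \le \Exp[V_i(X)\mid\mu]$. This needs care, since membership in $A$ is determined by the history before $o$ arrives, so a martingale/linearity argument over arrivals applies. Consequently $\Exp[V^*_{i'}(Y_j)] \le 2\,\Exp[V_i(X)]$.

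\emph{Step 2 (factor $\gamma$).} It remains to show $\Exp[V^*_{i'}(Y_j)] \ge \gamma\,\Exp[\Vstar{i}{Y_j}]$. I would argue agent by agent, in the same spirit as the conditioning on $\mathcal{F}_{-a}$ in the proof of Lemma~\ref{lem:hybrid-usw}. For each realization, fix a maximum matching $M^*$ between $\class{i}$ and $Y_j$. Each agent $a$ covered by $M^*$ lies in $\agents'$ with probability $\gamma$. If $\mu_a$ were independent of the event ``$a$ is covered by $M^*$'', restricting $M^*$ to $\class{i}'$ would give the bound immediately.

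The main obstacle is exactly this dependence: $Y_j$ is a function of all of $\mu$, including $\mu_a$. My first attempt would be a per-agent charging argument that avoids it. Condition on $\mathcal{F}_{-a}$ and compare the run with $\mu_a \le \gamma$ to the run with $\mu_a > \gamma$. In the first case, either $a$ is eventually matched, contributing to $V_i$, or all items it likes that arrive go through phase I with class $i\in Z$. In the second case, the item that $M^*$ assigns to $a$ can be charged to $a$'s counterpart when $a \in \agents'$.

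If this direct coupling fails, I would fall back on merging the two steps. I would charge each item $o \in Y_j$ that is matched by $M^*$ to some $a \in \class{i}$ directly to $a$, splitting into three cases: $a$ already matched (charge to $V_i$); $a$ unmatched and in $\agents'$ (charge via the symmetric choice among $Z$, as in Step~1); or $a \notin \agents'$. For the third case I would use $\Pr[\mu_a \le \gamma \mid \mathcal{F}_{-a}] = \gamma$ to argue that such a charge occurs with probability at most $\frac{1-\gamma}{\gamma}$ times the probability of one of the first two cases. Averaging over $\mu_a$ would then give the claimed factor $\frac{\gamma}{2}$.
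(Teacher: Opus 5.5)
Your Step 1 is sound (read ``the same conditional probability'' as ``at most'', since class $j$ need not lie in $Z$). Step 2, however, is false, so the decomposition through $V^*_{i'}(Y_j)$ cannot work. Take one agent $a$ in class $i$, one agent $b$ in class $j$, and a single item $o$ liked by both. Then $o\in Y_j$ with probability $\frac{\gamma^2}{2}+\gamma(1-\gamma)+\frac{(1-\gamma)^2}{2}=\frac12$, so $\gamma\,\Exp[V^*_i(Y_j)]=\frac{\gamma}{2}$. But $V^*_{i'}(Y_j)=1$ requires both $a\in N'$ and $o\in Y_j$, which has probability only $\frac{\gamma^2}{2}$. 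The lemma itself holds here, since $\Exp[V_i(X)]=\frac12$; only Step 2 fails.

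So the dependence you flagged cannot be coupled away. The event $\mu_a\le\gamma$ puts class $i$ into $Z$ for the items $a$ likes, which makes those items \emph{less} likely to land in $Y_j$. Envy therefore concentrates on class-$i$ agents outside $N'$.

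Your fallback has the right arithmetic, but its key inequality is the goal restated: that case (3) charges are at most $\frac{1-\gamma}{\gamma}$ times the case (1)/(2) charges. No mechanism is given for it. $M^*$ is fixed only at the end, and changing $\mu_a$ under $\mathcal{F}_{-a}$ changes the whole trajectory, and with it $Y_j$ and $M^*$. Read per agent, with case (2) meaning ``$a\in N'$, unmatched, and $o\in Y_j$'', it fails in the same example. Case (3) has probability $\frac{(1-\gamma)(1+\gamma)}{2}$, which exceeds $\frac{1-\gamma}{\gamma}\cdot\frac{\gamma^2}{2}=\frac{(1-\gamma)\gamma}{2}$.

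The missing idea is to keep $\mu_a$ fresh at the moment you charge to $a$. The paper does this by never fixing $M^*$. Instead it builds a cover for the dual of $V_i^*(Y_j(X))$ online, using deferred decisions:
\begin{itemize}
\item The uniform choice from $Z$ is re-described as scanning classes in a fresh random order.
\item When class $i$ is scanned, compatible class-$i$ agents are marked one at a time in the fixed order, and $\mu_a$ is consulted only after $a$ is marked. Marking stops once an unmatched agent of $N'$ is found.
\item When class $j$ is scanned while some compatible class-$i$ agent is still unmarked, $o$ is marked.
\end{itemize}
Every edge between $Y_j(X)$ and class $i$ is then covered, and unmarked class-$i$ agents keep i.i.d.\ uniform ranks. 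Let $z$ be the number of such compatible agents, and $p$ the probability that class $i$ is scanned before every other class able to take $o$. Each item costs at most $p(1-(1-\gamma)^z)/\gamma$ in marked agents (a geometric search) plus $p\mathbf{1}_{z\ge1}$ in marked items. Meanwhile $V_i$ gains at least $p(1-(1-\gamma)^z)$. Since $1-(1-\gamma)^z\ge\gamma$, the ratio is at least $\gamma/2$. The factor $1/\gamma$ is the expected number of marks per success, and it is paid jointly with the factor $2$ rather than separately on $N_i'$.
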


Let $\alpha = \gamma/2$ and fix $i$ and $j$.
To prove the lemma, we need to show that $\Exp[V_i(X)] \geq \alpha \cdot \Exp[V_i^*(Y_j(X))]$. Note that $V_i^*(Y_j(X))$ is upper bounded by the value of the following linear relaxation, where we use $E_{i,j}$ to denote the set of edges between items in $Y_j(X)$ and agents in class $i$:

    \begin{center}
    \begin{minipage}[t]{0.45\textwidth}
    \centering        \textbf{Primal}
        \begin{align*}
            \max_{x\in \mathbf{R}^{E_{i,j}}}\quad & \sum_{e \in E_{i,j}} x_e \\
            \text{s.t.}\quad & \sum_{e \in \delta(v)} x_e \le 1 \quad && \forall v \in V, \\
            & x_e \ge 0 \quad && \forall e \in E_{i,j}.
        \end{align*}
    \end{minipage}
    \hspace{1em}\vrule\hspace{1em}\begin{minipage}[t]{0.45\textwidth}
    \centering    \textbf{Dual}
    \begin{align*}
        \min \quad & \sum_{v \in V} y_v \\
        \text{s.t.} \quad & y_u + y_v \ge 1 \quad && \forall e = (u,v) \in E_{i,j}, \\
        & y_v \ge 0 \quad && \forall v \in V.
    \end{align*}
    \end{minipage}
\end{center}

We will construct a feasible dual solution $y$ such that $\Exp[\sum_{v \in V} y_v] \leq \frac{1}{\alpha} \cdot \Exp[V_i(X)]$, which will imply the desired guarantee. We do this by starting with $y_v = 0$ for all $v \in V$ and then \emph{marking} some agents in class $i$ and some items, which means that we set $y_v' = 1$. We now give an equivalent description of the algorithm that incorporates this marking step.

On the arrival of an item $o$, we proceed as follows:
\begin{enumerate}
    \item Set $W:=[k]$.
    \begin{enumerate}
        \item Choose a random class $i'$ from $W$ and remove it from $W$.
 \item  If $i'=i$, repeatedly keep marking the first unmarked agent in class $i$ compatible with $o$ (wrt the order of $\agents$). As soon as we either have marked an unmatched agent in $N'$ or we have marked all compatible agents in class $i$, continue to the next step.
        \item If $i'=j$ and there is an unmarked agent in class $i$ compatible with $o$, then mark $o$.
        \item If class $i'$ contains some unmatched agent in $\agents'$ compatible with $o$, we match $o$ to the first such agent in class $i'$ (wrt the order of $\agents$) and stop.
        \item Otherwise, if $W$ is not empty, we repeat the above steps, starting from step (a). If $W$ is empty, we proceed to step (2).
    \end{enumerate}
    \item We match $o$ to the compatible unmatched agent in $\agents\setminus \agents'$ with the lowest value of $\mu_a$.
\end{enumerate}
Observe that steps 1(b) and 1(c) are just used for the construction of the dual solution and do not affect the matching decisions of the algorithm.
We now show that the above algorithm constructs a feasible dual solution $y$ such that $\alpha \cdot  \Exp[\sum_{v \in V} y_v] \leq  \Exp[V_i(X)]$ via a sequence of claims.

First, we show that the marking procedure defines a valid dual certificate for the optimistic rematching problem defining $V_i^*(Y_j(X))$. 
In other words, every item-agent edge that class $i$ could use to envy class $j$'s bundle is covered either by marking the item or by marking the relevant class-$i$ agent.
\begin{claim}
    The above solution is feasible to the dual linear program. \label{claim:hybrid-cef-solution-feasible}
\end{claim}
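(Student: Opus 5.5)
The plan is to fix an arbitrary edge $e=(o,a)\in E_{i,j}$ and show that $y'_o=1$ or $y'_a=1$ at termination. Nonnegativity holds trivially because all entries are $0$ or $1$. Since $e\in E_{i,j}$, item $o$ lies in $Y_j(X)$, so $o$ was matched to some agent of class $j$, and $a\in N_i$ is compatible with $o$. The argument then examines the run of the marking procedure on the arrival of $o$ and shows that by the end of that arrival either $a$ or $o$ is marked. Marks are never removed, so this property persists to termination.

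The case split follows how $o$ came to be matched to class $j$. \textbf{Case 1:} $o$ was matched in step 1(d) with $i'=j$. Then class $j$ was drawn from $W$ during the loop for $o$. When it was drawn, step 1(c) was executed: if some agent of class $i$ compatible with $o$ was still unmarked at that moment, $o$ is marked and we are done. Otherwise every compatible agent of class $i$, in particular $a$, was already marked, and we are also done. One detail needs checking here: step 1(c) happens before step 1(d) within the same round. This ordering matters, because marking in 1(b) for $i'=i$ can only happen in a round with $i'=i\neq j$ (the case $i=j$ can be treated separately or is excluded).

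\textbf{Case 2:} $o$ was matched in step 2. Then $W$ was exhausted, so every class, including $j$, was drawn at some round of the loop, and the same step 1(c) argument applies verbatim: at the moment $j$ was drawn, either an unmarked compatible class-$i$ agent existed and $o$ got marked, or $a$ was already marked. In both cases, the conclusion is the same, so feasibility follows.

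I expect the main obstacle to be making sure the case analysis is exhaustive and respects the ordering of steps within the loop. In particular, I must confirm that whenever $o$ ends up in class $j$'s bundle, class $j$ was actually drawn from $W$ before the loop stopped. In Case 1, $j$ is the drawn class that stops the loop. In Case 2, all classes were drawn. Items matched in step 1(d) to some $i'\neq j$, or left unmatched, are not in $Y_j(X)$ and impose no constraint.

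The degenerate case $i=j$ needs a quick remark. There $E_{i,i}$ edges join items already matched to class $i$, and the same argument still works because step 1(c) is executed whenever $i'=j=i$. So no separate treatment is needed, provided 1(b) and 1(c) both fire when $i'=i=j$.
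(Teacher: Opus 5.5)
Your proof is correct and follows the paper's argument. Since $o\in Y_j(X)$, step 1(c) must have run with $i'=j$, so either $o$ was marked then or every compatible class-$i$ agent, including $a$, was already marked, and marks persist; your case split (step 1(d) with $i'=j$ versus step 2 after $W$ is exhausted) just spells out the paper's one-line claim that 1(c) was executed with $i'=j$, and your side remark about the ordering of 1(b) is unnecessary but harmless.
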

\begin{tproof}[Proof of Claim~\ref{claim:hybrid-cef-solution-feasible}]
    Consider an edge $e=(o,a) \in E_{i,j}$. This means that $o\in Y_j(X)$, and hence 1(c) has been executed with $i'=j$.
Therefore, either $o$ is marked, or all agents in class $i$ compatible with $o$ are marked. That is, either $y_o=1$ or $y_a=1$, so that we have $y_o+y_a \geq 1$ as desired.
\end{tproof}

Having established feasibility, weak duality reduces the CEF guarantee to bounding the cost of the constructed dual. 
We thus proceed to show that the expected dual cost charged to class $i$ is controlled by the actual value class $i$ receives, up to the desired factor $1/\alpha$.
\begin{claim}
    We have $\Exp[\sum_{v \in V} y_v] \leq \frac{1}{\alpha} \cdot \Exp[V_i(X)]$. \label{claim:hybrid-cef-solution-value}
\end{claim}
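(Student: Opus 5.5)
The plan is to bound the two kinds of dual mass separately: marked agents of class $i$ from step 1(b), and marked items from step 1(c). I will show
\[
\Exp[\#\text{marked items}] \le \Exp[\#\text{marked agents}] \le \tfrac{1}{\gamma}\Exp[V_i(X)],
\]
which gives $\Exp[\sum_v y_v] \le \frac{2}{\gamma}\Exp[V_i(X)] = \frac1\alpha \Exp[V_i(X)]$.

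\emph{Items versus agents.} Fix an item $o$ and the state just before it arrives. The random order in which step 1 draws classes from $W$ is a uniformly random permutation of $[k]$.

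First, $o$ can only be marked if $j$ is drawn before $i$. Suppose instead that $i$ is drawn first. Then step 1(b) either marks an unmatched agent in $\agents'$ of class $i$, or marks every compatible agent of class $i$. In the first case step 1(d) matches $o$ inside class $i$, because every agent marked earlier in this pass is not an unmatched $\agents'$ agent, so the first unmatched compatible $\agents'$ agent of class $i$ is the one just marked; the procedure stops and $j$ is never reached. In the second case no unmarked compatible class-$i$ agent remains, so 1(c) marks nothing.

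Second, consider a permutation in which $j$ is reached, $j$ precedes $i$, and $o$ gets marked. Let $\sigma'$ be this permutation with the positions of $i$ and $j$ swapped. The classes drawn before that position are the same, so under $\sigma'$ class $i$ is reached with the same state. Since $o$ was marked under the original permutation, some compatible class-$i$ agent was unmarked, so 1(b) marks at least one new agent. The swap is a bijection between permutations, so $\Pr[o\text{ marked}] \le \Exp[\#\text{agents marked at } o]$. Summing over items and using linearity of expectation gives the first inequality.

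\emph{Agents versus value.} Each agent of class $i$ is marked at most once, so it suffices to show $\Pr[a\text{ marked}] \le \frac1\gamma \Pr[a\text{ matched}]$ for each $a \in N_i$. Two observations drive this.
\begin{itemize}
\item If $a$ is marked and $a \in \agents'$, then $a$ is matched: either it was already matched, or it is an unmatched $\agents'$ agent, in which case marking stops at $a$ and step 1(d) matches $o$ to $a$, by the argument above.
\item Condition on $\mathcal{F}_{-a}$ and the class permutations. Compare the run with $\mu_a > \gamma$ against the run with $\mu_a \le \gamma$. Before $a$ is either marked or matched, $a$ has had no effect on decisions in class $i$'s scan, so the two runs coincide up to that moment. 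Hence if $a$ is marked in the $\mu_a>\gamma$ run, then in the $\mu_a\le\gamma$ run $a$ is matched by then, or is marked at the same moment and thus becomes matched.
\end{itemize}
Combining the two observations gives $\Pr[a \text{ matched} \mid \mathcal{F}_{-a}] \ge \gamma \cdot \Pr[a \text{ marked} \mid \mathcal{F}_{-a}]$. Here the marking probability for $\mu_a \le \gamma$ is also dominated, since marked with $\mu_a\le\gamma$ already implies matched. Summing over $a \in N_i$ yields $\Exp[\#\text{marked agents}] \le \frac1\gamma \Exp[V_i(X)]$.

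\emph{Main obstacle.} The delicate step is the coupling in the second bullet. The membership $a \in \agents'$ influences the trajectory, so one must check carefully that the run with $\mu_a \le \gamma$ is identical to the run with $\mu_a > \gamma$ until $a$ is first touched. This needs a check that $a$ being an unmatched $\agents'$ agent alters only class $i$'s candidate set, and only through $a$ itself being chosen. I would first try to formalize this by induction over arrivals, up to the first time $a$ is scanned in 1(b) or selected in 1(d).
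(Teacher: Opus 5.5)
Your argument is correct, but it is organized differently from the paper's.

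\textbf{The paper's route.} The paper does purely per-item accounting. It conditions on the history and on ranks outside class $i$. By deferred decisions, the ranks of unmarked class-$i$ agents are then still i.i.d.\ uniform. Writing $p=1/(|R|+1)$ and $z$ for the number of unmarked compatible class-$i$ agents, it shows three per-item bounds:
\begin{itemize}
\item the gain in $\Exp[V_i(X)]$ is at least $p(1-(1-\gamma)^z)$;
\item the agent-side cost is at most $p(1-(1-\gamma)^z)/\gamma$;
\item the item-side cost is at most $p\mathbf{1}_{z\ge 1}$.
\end{itemize}
It then closes with $1-(1-\gamma)^z\ge\gamma$.

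\textbf{Your route.} You chain two charging arguments instead. First, you charge item marks to agent marks, item by item, via the $i\leftrightarrow j$ swap on the class order. This replaces the paper's $1/(|R|+2)\le 1/(|R|+1)$ symmetry step and needs no information about ranks. Second, you charge agent marks to matched agents, agent by agent, via a coupling in $\mu_a$. The coupling shows that whether $a$ is ever marked does not depend on $\mu_a$, while being marked with $\mu_a\le\gamma$ forces a match.

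\textbf{Comparison.} Your route dispenses with $p$, $z$, the geometric sum and the closing inequality. It also makes the constant $2/\gamma$ visibly the sum of two charges at rate $1/\gamma$. The paper's route gives explicit per-item bounds and packages the needed independence into a single deferred-decisions remark.

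\textbf{The coupling goes through.} The coupling you flag as the main obstacle is valid. The rank $\mu_a$ influences the run in only three places:
\begin{itemize}
\item the stopping rule of 1(b), when $a$ itself is scanned;
\item step 1(d) for class $i$, which selects the agent at which 1(b) stopped, so $a$ matters only if 1(b) reached it;
\item step 2, which for an item liked by $a$ is reached only after class $i$'s step 1(b) in the same pass has marked every compatible class-$i$ agent, including $a$.
\end{itemize}
Hence all runs agree until $a$ is first marked.

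\textbf{One small repair.} For 1(d) to select the agent just marked, you need an invariant covering all marked class-$i$ agents, not only those marked in the current pass. Every marked class-$i$ agent, including those marked at earlier items, must be matched or lie outside $\agents'$. This follows by a one-line induction over arrivals.
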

We will prove this claim by showing that for each arrival of an item $o$, the increase of the left-hand side is bounded by the increase of the right-hand side. So, fix an item $o$ and condition on the behavior of the algorithm in steps 1(a)-(e) before the arrival of $o$ and on all rank values outside class $i$. Denote this conditioning as $\mathcal{F}_o$. It is important to note that under this conditioning, the values of $\mu_a$ for all unmarked agents in class $i$ are still independent and uniformly distributed in $[0,1]$. This holds because we only consult the value of $\mu_a$ for an agent $a\in \class{i}$ (in step 1(b), 1(d) or 2) after marking it (in step 1(b)).

Let $z$ be the number of unmarked
agents in class $i$ compatible with $o$ under the conditioning $\mathcal{F}_o$.
Let $R$ be the set of classes $\ell \neq i$ that contain an unmatched agent in $\agents'$ compatible with $o$ under $\mathcal{F}_o$.
Let $p$ be the probability, over the fresh random class order used for item $o$, that class $i$ is selected before every class in $R$.
Equivalently, $p=1/(|R|+1)$.
By symmetry, the probability that class $j$ is selected before every class in $(R\cup\{i\})\setminus \{j\}$ is at most $p$: it is exactly $p$ if $j\in R$, and it is $1/(|R|+2)\le p$ otherwise.
We now lower bound the increase in $\Exp[V_i(X)]$ in terms of $z$ and $p$.
\begin{claim}
    The increase in $\Exp[V_i(X)]$ is at least $(1-(1-\gamma)^z)p$. \label{claim:hybrid-cef-lhs-bound}
\end{claim}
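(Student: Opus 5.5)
We condition on $\mathcal{F}_o$ and restrict attention to the event that class $i$ is drawn from $W$ before every class in $R$. This event depends only on the fresh class order for $o$, so it is independent of the unmarked ranks in class $i$, and its probability is $p=1/(|R|+1)$. On this event, every class drawn before $i$ either is not in $R$ or has no unmatched compatible agent in $\agents'$. Step 1(d) therefore does not stop before class $i$ is processed. The ordering among classes outside $R\cup\{i\}$ is irrelevant, because reaching such a class never triggers 1(d). Once $i'=i$ is drawn, step 1(b) scans the compatible agents of class $i$ in the fixed order of $\agents$. Whenever an unmatched agent in $\agents'$ is marked, 1(d) matches $o$ into class $i$. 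So $V_i$ increases by at least one whenever one of the compatible agents scanned by 1(b) is an unmatched member of $\agents'$.

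The main step is to show this has conditional probability at least $1-(1-\gamma)^z$. Consider the $z$ compatible unmarked agents of class $i$. An unmarked agent has never had its $\mu_a$ consulted, so it has never been matched: matching an agent of class $i$ requires first marking it in 1(b), or, in step 2, a lookup of $\mu_a$, which we also treat as occurring only after marking, as the paper asserts. Each of these $z$ agents is therefore unmatched. By the independence remark preceding the claim, their $\mu$-values are i.i.d.\ uniform on $[0,1]$, so each lies in $\agents'$ (i.e.\ $\mu_a\le\gamma$) independently with probability $\gamma$.

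Step 1(b) walks through marked agents and then these unmarked agents in order. It stops at the first unmatched agent in $\agents'$, or after exhausting all compatible agents. If some already-marked compatible agent is unmatched and in $\agents'$, the match happens anyway, which only helps. Otherwise it reaches the unmarked ones, and the probability that none of the $z$ lies in $\agents'$ is $(1-\gamma)^z$. Hence $o$ is matched into class $i$ with conditional probability at least $(1-(1-\gamma)^z)$ on the event of probability $p$. Since $V_i$ never decreases, multiplying gives the claimed bound $(1-(1-\gamma)^z)p$ on the expected increase.

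The obstacle I expect is justifying rigorously that unmarked class-$i$ agents are unmatched and have uniform, independent $\mu$ under $\mathcal{F}_o$. In particular, step 2 seems to consult $\mu_a$ for unmarked agents in $\agents\setminus\agents'$. I would handle this by noting that step 2 is reached only if step 1(b) ran for class $i$ without finding an unmatched agent in $\agents'$. Every compatible agent of class $i$ is then marked, so any class-$i$ agent matched in step 2 was already marked. This preserves the invariant by induction over arrivals.
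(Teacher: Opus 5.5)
Your proof is correct and follows essentially the same route as the paper. You condition on $\mathcal{F}_o$ and take the probability-$p$ event that class $i$ is drawn before every class in $R$. This event is independent of the event, of probability $1-(1-\gamma)^z$, that some unmarked compatible class-$i$ agent lies in $\agents'$. Such agents are unmatched, so step 1(d) matches $o$ into class $i$.

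The only difference is that you prove a stronger invariant: every unmarked class-$i$ agent is unmatched, because step 2 is reached only after 1(b) has marked all compatible class-$i$ agents. The paper needs only that unmarked agents in $\agents'$ are unmatched, which follows directly because step 2 never matches agents in $\agents'$. Your extra observation also justifies the independence remark for step 2, which the paper states without proof.
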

\begin{tproof}[Proof of Claim~\ref{claim:hybrid-cef-lhs-bound}]
The increase in $\Exp[V_i(X)]$ is at least equal to the probability that $o$ is matched to an agent in $\class{i}$ in step 1(d).
Let $A_i$ be the event that at least one of the $z$ unmarked agents in class $i$ compatible with $o$ is in $\agents'$.
Since the rank values of these agents are independent and uniformly distributed in $[0,1]$ under $\mathcal F_o$, we have $\Pr[A_i]=1-(1-\gamma)^z$.
Moreover, $A_i$ is independent of the fresh random class order for item $o$.
Any unmarked compatible class-$i$ agent in $\agents'$ is unmatched: step 2 never matches agents in $\agents'$, and whenever such an agent is matched in step 1(d), step 1(b) marks it before the match occurs.
Thus, if $A_i$ occurs and class $i$ is selected before every class in $R$, then $o$ is matched to class $i$ in step 1(d).
The latter event has probability $p$, so the increase in $\Exp[V_i(X)]$ is at least $(1-(1-\gamma)^z)p$ as desired.
\end{tproof}

To further upper bound the increase in $\Exp[\sum_{v \in V} y_v]$ in terms of $z$ and $p$, observe that only agents in class $i$ and item vertices can be marked, so we write $\sum_{v \in V} y_v= \sum_{a \in \class{i}} y_a + \sum_{o\in M} y_o$. 
\begin{claim}
    The increase in $\Exp[\sum_{a \in \class{i}} y_a]$ is at most $p\cdot (1-(1-\gamma)^z)/\gamma$. \label{claim:hybrid-cef-rhs-bound-1}
\end{claim}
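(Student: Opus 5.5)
The plan is to track exactly when step 1(b) marks class-$i$ agents during the processing of $o$, and to bound the expected number of agents marked using the independence of the unmarked agents' rank values under $\mathcal F_o$.

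\textbf{When marking happens.} Class-$i$ agents are marked only in step 1(b), and only if class $i$ is drawn as $i'$ before the procedure stops in step 1(d). The procedure stops at the first drawn class that has an unmatched compatible agent in $\agents'$. Under $\mathcal F_o$, every class $\ell\neq i$ that could cause a stop lies in $R$. Hence class $i$ is reached at all only if it is drawn before every class in $R$, an event of probability $p$.

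Two further facts matter. First, this event depends only on the fresh class order for $o$. Second, under $\mathcal F_o$ the class order is independent of the rank values $\mu_a$ of the $z$ unmarked compatible class-$i$ agents.

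\textbf{Bounding the number marked.} Condition on class $i$ being reached. Step 1(b) then scans the $z$ unmarked compatible agents of class $i$ in the fixed order of $\agents$. It marks each one, and stops after marking the first agent in $\agents'$, or after marking all $z$.

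Any unmarked compatible class-$i$ agent in $\agents'$ is unmatched, as already argued in Claim~\ref{claim:hybrid-cef-lhs-bound}. So the stopping criterion is exactly ``agent has $\mu_a\le\gamma$''. Each such agent independently satisfies it with probability $\gamma$. The number of marked agents $T$ is therefore a truncated geometric variable:
\[
\Pr[T\ge t]=(1-\gamma)^{t-1}\quad\text{for } 1\le t\le z .
\]
Summing gives
\[
\Exp[T]=\sum_{t=1}^{z}(1-\gamma)^{t-1}=\frac{1-(1-\gamma)^z}{\gamma}.
\]

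\textbf{Conclusion.} Each marking raises $\sum_{a\in\class{i}}y_a$ by exactly $1$. Agents already marked do not increase the sum again, and only unmarked agents are scanned. Therefore the conditional expected increase is at most
\[
p\cdot\frac{1-(1-\gamma)^z}{\gamma},
\]
using the independence of the reaching event and the ranks.

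Two edge cases need no separate treatment. If $z=0$, the bound is $0$, which is correct. If $\gamma=0$, the claim is vacuous, or read as a limit with value $zp$.

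\textbf{Main obstacle.} The delicate step is justifying that, conditional on $\mathcal F_o$ and on class $i$ being reached, the scanned agents' events $\{\mu_a\le\gamma\}$ are i.i.d.\ with probability $\gamma$. The justification is that $\mu_a$ of a class-$i$ agent is consulted only after it is marked. One also needs that the stop at a class-$i$ agent in $\agents'$ coincides with the match condition of step 1(d), so that the scan halts precisely there.
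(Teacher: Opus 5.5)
Your proposal is correct and follows essentially the same argument as the paper. Marking in step 1(b) happens only when class $i$ is drawn before every class in $R$, which has probability $p$; since the class order is independent of the still-fresh ranks of the $z$ unmarked compatible agents, the number of marks is a truncated geometric variable with expectation $\sum_{t=1}^{z}(1-\gamma)^{t-1}=(1-(1-\gamma)^z)/\gamma$. Your added observations, that class $i$ is reached exactly when it precedes $R$ and that the scan halts at the first unmarked agent with $\mu_a\le\gamma$, make explicit what the paper uses implicitly.
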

\begin{tproof}[Proof of Claim~\ref{claim:hybrid-cef-rhs-bound-1}]
    We only increase $\sum_{a \in \class{i}} y_a$ when we mark an agent in class $i$ in step 1(b). This can only happen if class $i$ is selected before every class in $R$, which happens with probability $p$. If we select class $i$, then the probability that we mark the $t$-th unmarked agent in $\class{i}$ is equal to the probability that all $t-1$ prior unmarked agents in class $i$ have $\mu_a > \gamma$. This happens with probability $(1-\gamma)^{t-1}$. Since the class-order event is independent of the remaining rank values in class $i$, the expected increase in $\sum_{a \in \class{i}} y_a$ is at most $p\cdot \sum_{t=1}^z (1-\gamma)^{t-1} = p\cdot (1-(1-\gamma)^z)/\gamma$ as desired.
\end{tproof}
 Lastly, we bound the item-side dual cost.
\begin{claim}
    The increase in $\Exp[\sum_{o\in M} y_o]$ is at most $p\mathbf{1}_{z\geq 1}$.
    \label{claim:hybrid-cef-rhs-bound-2}
\end{claim}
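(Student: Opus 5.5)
We prove the claim by tracking where item marks arise. Under the conditioning $\mathcal{F}_o$, the dual variable $y_o$ of the arriving item $o$ is the only item-side quantity that can change while $o$ is processed. It becomes $1$ only in step 1(c), and for that step to mark $o$ two things must hold. First, the class $i'=j$ must be drawn in the loop of step 1. Second, at the moment $j$ is drawn, some unmarked class-$i$ agent compatible with $o$ must still exist. The second condition immediately requires $z\geq 1$, because marks are never removed and so the set of unmarked compatible class-$i$ agents can only shrink during the loop. This gives the indicator $\mathbf{1}_{z\geq 1}$.

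Next we bound the probability that step 1(c) is reached with $i'=j$. The loop stops at the first class $i'$ that contains an unmatched agent in $\agents'$ compatible with $o$. Every class $\ell\in R$ has this property, and the matched/unmatched status of agents outside class $i$ is fixed under $\mathcal{F}_o$, since the rank values outside class $i$ and all earlier steps are conditioned on. So if any class of $R\setminus\{j\}$ is drawn before $j$, the loop stops before $j$ and $o$ is not marked. Class $i$ needs separate care. If $i$ is drawn before $j$, then step 1(b) keeps marking compatible class-$i$ agents. It either marks an unmatched agent of $\agents'$, in which case step 1(d) matches $o$ into class $i$ and the loop stops, or it marks all compatible class-$i$ agents, in which case no unmarked compatible class-$i$ agent remains and step 1(c) cannot mark $o$ afterwards. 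Either way $o$ is not marked.

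It follows that marking $o$ requires $j$ to be drawn before every class in $(R\cup\{i\})\setminus\{j\}$. The random class order for $o$ is fresh, so this event has probability exactly $1/|R\cup\{i\}|$ or $1/(|R\cup\{i\}|+1)$, according to whether $j\in R$ or not. The paper has already noted that this is at most $p=1/(|R|+1)$. Combining the two parts, the expected increase in $\sum_{o}y_o$ is at most $p\,\mathbf{1}_{z\geq1}$, and taking expectation over $\mathcal{F}_o$ preserves the bound.

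The main obstacle is the case where $i$ precedes $j$. The argument that the loop "either stops or exhausts class $i$" must hold regardless of the unconditioned rank values of class-$i$ agents. It does, because both branches of step 1(b) make the subsequent marking of $o$ impossible, whatever those rank values turn out to be. A second point to check is that $R$ is defined relative to $\mathcal{F}_o$ and is not affected by class-$i$ randomness. This holds because $R$ only involves classes $\ell\neq i$.
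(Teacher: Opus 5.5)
Your proof is correct and follows the same route as the paper. Item $o$ can only be marked in step 1(c), which requires an unmarked compatible class-$i$ agent (so $z\ge 1$) and requires $j$ to be drawn before $i$ and before every class of $R\setminus\{j\}$; by the symmetry remark this event has probability at most $p$, and you simply spell out the two cases $j\in R$ and $j\notin R$ explicitly.
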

\begin{tproof}[Proof of Claim~\ref{claim:hybrid-cef-rhs-bound-2}]
    We only increase $\sum_{o\in M} y_o$ when we mark $o$ in step 1(c).
    This only happens when there is an unmarked agent in class $i$ compatible with $o$ (i.e. $z\geq 1$).
    Furthermore, $o$ can only be marked if class $j$ is selected before class $i$ and before every class in $R\setminus\{j\}$; otherwise the item either has already stopped at another active class, or class $i$ has already been processed.
    By the symmetry noted above, this event has probability at most $p$.
    Hence, the expected increase in $\sum_{o\in M} y_o$ is at most $p\mathbf{1}_{z\geq 1}$ as desired.
\end{tproof}
 \begin{proof}[Proof of Claim~\ref{claim:hybrid-cef-solution-value}]
    From Claims~\ref{claim:hybrid-cef-rhs-bound-1} and \ref{claim:hybrid-cef-rhs-bound-2}, we get that the increase in $\Exp[\sum_{v \in V} y_v]$ is at most $p\cdot (1-(1-\gamma)^z)/\gamma + p\mathbf{1}_{z\geq 1}$. By Claim~\ref{claim:hybrid-cef-lhs-bound}, the increase in $\Exp[V_i(X)]$ is at least $(1-(1-\gamma)^z)p$. We want to show that $\alpha$ times the former is at most the latter. That is, we want to show that for all $z\geq 0$:
    \begin{align*}
        \alpha \cdot \left(p\cdot (1-(1-\gamma)^z)/\gamma + p\mathbf{1}_{z\geq 1}\right) \leq (1-(1-\gamma)^z)p.
    \end{align*}
    Since for $z=0$ both sides are $0$, we can restrict to $z\geq 1$. By dividing both sides by $\alpha p (1-(1-\gamma)^z)$ and substituting $\alpha= \frac12\gamma$, the inequality is equivalent to showing that for all $z\geq 1$:
    \begin{align*}
        1/\gamma + \frac{1}{1-(1-\gamma)^z} \leq 2/\gamma.
    \end{align*}
    Since $1-(1-\gamma)^z \geq \gamma $, the above inequality holds, which implies that $\alpha$ times the increase in $\Exp[\sum_{v \in V} y_v]$ is at most the increase in $\Exp[V_i(X)]$ as desired.
     By summing over all items, we get $\alpha \Exp[\sum_{v \in V} y_v] \leq  \Exp[V_i(X)]$.
\end{proof}
 
Thus, we obtain the desired CEF guarantee by weak duality.
\begin{proof}[Proof of Lemma~\ref{lem:hybrid-cef}]
    By Claims~\ref{claim:hybrid-cef-solution-feasible} and \ref{claim:hybrid-cef-solution-value}, $y$ is a feasible dual solution with $\Exp[\sum_{v \in V} y_v] \leq \frac{1}{\alpha} \cdot \Exp[V_i(X)]$. By weak duality, we have $\alpha \Exp[V_i^*(Y_j(X))] \leq \alpha \Exp[\sum_{v \in V} y_v] \leq   \Exp[V_i(X)]$ as desired.
\end{proof}

\section{Discussion \& Open Questions} \label{sec:discuss}
We resolve the open question of whether constant-factor CEF can be achieved alongside USW strictly better than $\frac{1}{2}$, introducing the first algorithms that continuously parametrize the fairness-efficiency frontier in online class matching. 
The threshold-based design is simple, interpretable, and is analyzed via a clean primal-dual framework.
For divisible matching, EFTT achieves $(1-e^{-\gamma})$-CEF and $(1-\frac{e^{\gamma-1}}{\gamma+1})$-USW; for indivisible matching, Hybrid Ranking achieves $\frac{\gamma}{2}$-CEF with the same USW. 
While our focus is on CEF, our techniques might also extend to other group-fairness notions, such as group proportionality and maximin share fairness, which we leave for future work. In \cref{sec:apx-cprop} we provide preliminary evidence of this potential by showing that rounding EFTT to an indivisible solution via online correlated selection yields proportionality guarantees parametrized by $\gamma$, generalizing prior results from \cite{hosseini_class_2024}.

Key open problems include: (1) closing the gap between our algorithms and upper bound to characterize optimality, (2) improving the $\frac{\gamma}{2}$-CEF guarantee for indivisible matching, possibly through new rounding techniques beyond online correlated selection, and (3) extending the parametric framework to richer valuation classes (weighted preferences) and alternative arrival models (random-order, stochastic).

\section*{Acknowledgements}
We thank Suho Shin for helpful discussions and his feedback on an early draft of this work.

\bibliographystyle{plain}
\bibliography{references}

\clearpage
\appendix

\section{Appendix}
\subsection{Class Proportionality} \label{sec:apx-cprop}
We now extend the analysis of EFTT to the indivisible setting, showcasing its integrality guarantees after rounding via online correlated selection (OCS)~\cite{fahrbach_edge-weighted_2022}.
While the rounding procedure does not readily provide guarantees on envy-freeness, we can instead analysis population level fairness via proportionality.
More formally, the \emph{proportional share} of class $i$ is
$\prop_i = \max_{X} \min_{j \in [k]} \Vstar{i}{y_j(X)}$,
where the maximum is over all fractional matchings.

\begin{definition}[$\alpha$-CPROP]
A matching $X$ is \emph{$\alpha$-class proportional} ($\alpha$-CPROP) if
$V_i(X) \ge \alpha \cdot \prop_i$ for every class $i \in [k]$.
\end{definition}

We round the fractional solution produced by EFTT into an indivisible matching using online correlated selection (OCS), following~\cite{hosseini_class_2024}. 
The algorithm maintains a guiding fractional matching $\tilde{x}_{a,o}$ by running EFTT (without capping agent values at $1$).
When each item $o$ arrives, we obtain the vector $(\tilde{x}_{a,o})_{a \in N}$ with $\sum_a \tilde{x}_{a,o} \le 1$, feed it to the OCS, and match $o$ to the selected agent if that agent is unmatched, and otherwise to an arbitrary unmatched agent who likes $o$ (for non-wastefulness).

We use the following guarantee.
\begin{lemma}[OCS; see~\cite{fahrbach_edge-weighted_2022}]\label{lem:ocs}
    There is a polynomial-time online algorithm which, in each step, takes as input a non-negative vector $(\tilde{x}_{a,o})_{a \in N}$ for some $o \in M$ satisfying $\sum_{a \in N} \tilde{x}_{a,o} \le 1$ and selects an agent $a$ with positive $\tilde{x}_{a,o}$. By the end, each agent $a$ is selected at least once with probability at least
    \[
        p(\tilde{x}_a) = 1 - \exp \left( -\tilde{x}_a - \frac12 \tilde{x}_a^2 - \frac{4 - 2\sqrt{3}}{3} \tilde{x}_a^3 \right),
    \]
    where $\tilde{x}_a = \sum_{o \in M} \tilde{x}_{a,o}$.
\end{lemma}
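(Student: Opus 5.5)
The plan is not to re-derive this from scratch but to reduce it to the multiway online correlated selection theorem in the literature, checking carefully that our interface matches the one there. The OCS results of~\cite{fahrbach_edge-weighted_2022} (and the refined multiway analysis that yields the cubic constant $\frac{4-2\sqrt{3}}{3}$) are stated exactly as follows. The input is a sequence of sub-probability vectors over a fixed ground set, here the agents $N$. Each step picks one element with positive weight. The guarantee is that for every element $a$, $\Pr[a \text{ never selected}] \le \exp\bigl(-\tilde{x}_a - \tfrac12\tilde{x}_a^2 - \tfrac{4-2\sqrt3}{3}\tilde{x}_a^3\bigr)$, where $\tilde{x}_a$ is the total mass $a$ received.

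The first step is to verify the hypotheses. Each vector $(\tilde{x}_{a,o})_{a}$ is nonnegative with $\sum_a \tilde{x}_{a,o}\le 1$; if the sum is strictly less than $1$, we pad the vector with a dummy element. We then check that the guarantee depends only on $\tilde{x}_a$ and not on how the mass was split across items. This holds because the OCS bound is oblivious to the adversary's choice of vectors, and it applies to an adaptive online sequence.

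The second step, which makes the plan self-contained, is to recall the structure of the argument for a fixed agent $a$. Each step's selection is made by a randomized rule. With some probability, the current step is paired with a previous step in which $a$ also had mass, and then the opposite choice is taken ("anti-correlation"); otherwise the selection is a fresh independent sample proportional to $\tilde{x}_{\cdot,o}$. For the fixed $a$, we consider the sequence of steps in which $a$ has positive mass $x_1,x_2,\dots$ and bound the probability that all of them fail to pick $a$. Fully independent sampling gives $\prod_t(1-x_t)\le e^{-\tilde{x}_a}$. The anti-correlation contributes extra factors: over each consecutive pair of steps in which $a$ has mass, the probability of missing both is reduced by a term of order $x_t x_{t+1}$. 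Summing these gains gives the $\tfrac12\tilde{x}_a^2$ term, and higher-order chains give the cubic term.

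The main obstacle is this per-agent correlation accounting. One has to show that the dependence introduced by pairing steps involving \emph{other} agents never hurts $a$, and to optimize the pairing probabilities to extract exactly the constant $\frac{4-2\sqrt3}{3}$. I would not redo this optimization. Instead I would cite it, and rely on the monotonicity of $p(\cdot)$ to use the bound in the downstream CPROP analysis.
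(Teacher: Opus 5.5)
Your proposal matches the paper, which likewise gives no proof of this lemma and imports it as a black box from the online correlated selection literature. Checking the interface is all that is required: sub-probability vectors, padding with a dummy whose selection you replace by any agent of positive mass, and noting that the uncapped EFTT vectors are fixed independently of the OCS's coins, so your claim about adaptive sequences is not needed.

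One caution: your sketch of pairing a step with an earlier one and taking the opposite choice describes the two-way OCS of Fahrbach et al. The bound with the cubic constant $\frac{4-2\sqrt{3}}{3}$ comes from the later multiway semi-OCS of Gao et al., a different construction. So that paragraph should not be offered as the structure of the cited proof, though this is harmless since nothing rests on it.
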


As we will see in the proof of our main theorem, OCS provides a marginal lower bound on the probability that each agent is selected as a function of its guiding fractional load.
While these bounds are sufficient for proportionality (whose bounding value is independent of the rounded allocation), correlations among the items associated to another class $j$ can substantially alter the envious relationships between classes.
The CEF proofs in the main text require controlling pairwise relations and accumuluated correlations prevent such guarantees with OCS rounding.

We proceed with our main rounding result.

\begin{theorem}
\label{thm:rounded-cprop}
Fix $\gamma \in [0,1]$. Let $\widetilde X = (\widetilde x_{a,o})$ be the
\emph{uncapped} guiding fractional process obtained by running EFTT with
threshold $\gamma$, and let $X$ be the integral matching produced by the
OCS rounding scheme: when item $o$ arrives, feed the vector
$(\widetilde x_{a,o})_{a \in N}$ to OCS, match $o$ to the selected agent if
that agent is unmatched, and otherwise match $o$ to an arbitrary unmatched
agent who likes it (if one exists).

Then, for every class $i \in [k]$,
\[
    \EE[V_i(X)] \;\ge\; \rho(\gamma)\cdot \prop_i,
\]
where
\[
    \rho(\gamma)
    \;=\;
    \int_0^\gamma e^{-\theta} p'(\theta)\, d\theta,
    \qquad
    p(t)
    \;=\;
    1 - \exp\!\left(
        -t - \frac{1}{2}t^2 - \frac{4-2\sqrt{3}}{3}t^3
    \right).
\]
Hence the algorithm is $\rho(\gamma)$-CPROP in expectation.
\end{theorem}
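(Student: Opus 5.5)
The plan is to mirror the layer-cake argument of Lemma~\ref{lem:eftt-cef}, with the OCS marginal guarantee (Lemma~\ref{lem:ocs}) inserted in place of the exact fractional load. Fix a class $i$ and let $\tilde x_a$ denote the final uncapped guiding load of agent $a \in N_i$.

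\textbf{Step 1 (lower bound on class value).} I will show $\EE[V_i(X)] \ge \sum_{a \in N_i} p(\tilde x_a)$, using the following event-wise argument. If OCS ever selects $a$ for some item $o$, then at that moment either $a$ is unmatched and receives $o$, or $a$ was already matched. In both cases $a$ ends up matched. Hence $\Pr[a \text{ matched}] \ge p(\tilde x_a)$.

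Next write $f(\theta) = |\{a \in N_i : \tilde x_a \ge \theta\}|$. Since $p$ is increasing with $p(0) = 0$,
\[
\sum_{a} p(\tilde x_a) = \int_0^\infty p'(\theta) f(\theta)\, d\theta \ge \int_0^\gamma p'(\theta) f(\theta)\, d\theta .
\]

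\textbf{Step 2 (upper bound on $\prop_i$ from the guiding process).} The guiding process is EFTT, so the equal-filling argument of Lemma~\ref{lem:eftt-cef} applies to it. Fix $\theta \in [0,\gamma]$. Consider any item compatible with an agent of $N_i$ whose guiding load is still below $\theta$. During such an item's Phase~I, class $i$ remains in $Z_\gamma$, so class $i$ receives at least as much of that item as any other class. I will turn this into a bound on the proportional share, as follows.

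Let $X^\star$ be any fractional matching and let $S_\theta$ be the set of agents with $\tilde x_a < \theta$. For each $j$, the part of $\Vstar{i}{y_j(X^\star)}$ that goes through agents in $S_\theta$ is bounded using the equal-filling property of the guiding process together with a pigeonhole over the $k$ classes. Specifically, $\min_j \Vstar{i}{y_j(X^\star)}$ is at most $1/k$ of $\sum_j \Vstar{i}{y_j(X^\star)}$. The agents in $S_\theta$ absorb at most $\int_0^\theta f$ of guiding mass along their compatible items, while the agents with $\tilde x_a \ge \theta$ contribute at most $f(\theta)$. Combining these, I aim for
\[
\prop_i \le \int_0^\theta f(z)\, dz + f(\theta) \qquad \text{for every } \theta \in [0,\gamma].
\]

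\textbf{Step 3 (integration).} Multiply the Step~2 bound by the weight $e^{-\theta} p'(\theta)$ and integrate over $[0,\gamma]$. The aim is to dominate the result by the Step~1 integral. Swapping the order of integration gives
\[
\rho(\gamma)\prop_i \le \int_0^\gamma f(z) \int_z^\gamma e^{-\theta} p'(\theta)\, d\theta\, dz + \int_0^\gamma e^{-\theta} p'(\theta) f(\theta)\, d\theta .
\]
So I need this to be at most $\int_0^\gamma p'(z) f(z)\, dz$.

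This is the step I expect to be the main obstacle. In Lemma~\ref{lem:eftt-cef} the weight $e^{\theta-\gamma}$ telescoped exactly. Here the weight is $e^{-\theta} p'(\theta)$, and the clean cancellation needs the ODE-type inequality $\int_z^\gamma e^{-\theta} p'(\theta)\, d\theta + e^{-z} p'(z) \le p'(z)$.

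If that inequality fails, the fallback is to redo Step~2 more carefully. Rather than the crude bound $f(\theta)$, I would charge the agents with $\tilde x_a \ge \theta$ through $p'$-weighted shares of the items they absorbed. The goal is the sharper pointwise bound $\prop_i \le \int_0^\theta f + f(\theta)$ weighted so that its $e^{-\theta}$-average is exactly $\sum_a \int_0^{\min(\tilde x_a,\gamma)} p'$. I would also double-check the precise form of $\rho$ against this telescoping, since the intended identity is $\int_0^\gamma e^{-\theta} p'(\theta) \bigl(\int_0^\theta f + f(\theta)\bigr)\, d\theta \le \int_0^\gamma p' f$ after integrating by parts in $\theta$.
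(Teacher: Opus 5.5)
Your Steps~1 and~2 follow the paper. The envelope $\prop_i \le \int_0^\theta f + f(\theta)$ for $\theta\in[0,\gamma]$ is exactly the paper's key claim. To make your sketch precise, let $T_i(\theta)$ be the set of items liked by some class-$i$ agent of final load below $\theta$, and show $|T_i(\theta)|\le k\int_0^\theta f$. Each such item is fully allocated in Phase~I with class $i$ active throughout. So class $i$ gets at least a $1/k$ share of it, and all of that share goes to agents of current load below $\theta$. Then $\sum_j \Vstar{i}{\widehat y^j} \le |T_i(\theta)| + k f(\theta)$, and min $\le$ average finishes it.

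\textbf{The gap is in Step~3: the weight $w(\theta)=e^{-\theta}p'(\theta)$ does not work.} The inequality you need, $\int_z^\gamma w + w(z)\le p'(z)$, fails at $z=0$ for every $\gamma>0$: the left side is $\rho(\gamma)+1$ and the right side is $p'(0)=1$. This is not merely an overly strong pointwise condition. Take $f=n\mathbf{1}_{[0,t]}$ with $t\to 0$. Then $\int_0^\gamma w(\theta)\bigl(\int_0^\theta f+f(\theta)\bigr)d\theta = nt(1+\rho(\gamma))+O(t^2)$, whereas $\int_0^\gamma p'f = np(t) = nt+O(t^2)$, so your chain cannot close.

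A sanity check exposes the problem. In the fractional case $p'\equiv 1$, your weight becomes $e^{-\theta}$, the mirror image of the weight $e^{\theta-\gamma}$ that works in Lemma~\ref{lem:eftt-cef}. Both have mass $1-e^{-\gamma}$, which is why $\rho(\gamma)$ came out right. But only $e^{\theta-\gamma}$ satisfies $c(z)+\int_z^\gamma c = 1$.

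\textbf{The missing idea is to choose the kernel so the telescoping is exact.} Solve $c(\theta)+\int_\theta^\gamma c(u)\,du = p'(\theta)$ on $[0,\gamma]$. Writing $F(\theta)=\int_\theta^\gamma c$, this becomes $F-F'=p'$ with $F(\gamma)=0$. The solution is $F_\gamma(\theta)=e^{\theta}\int_\theta^\gamma e^{-z}p'(z)\,dz$, so $c_\gamma = p'-F_\gamma$. Its total mass is $F_\gamma(0)=\rho(\gamma)$, so $\rho$ is exactly the right constant and needs no re-derivation.

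You also need $c_\gamma\ge 0$ in order to multiply the envelope inequality by it. This follows from concavity of $p$, since $p'$ is nonincreasing: $F_\gamma(\theta)\le p'(\theta)\,e^{\theta}\int_\theta^\gamma e^{-z}\,dz = p'(\theta)(1-e^{\theta-\gamma})\le p'(\theta)$.

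Multiplying your Step~2 bound by $c_\gamma$ and swapping the order of integration gives $\rho(\gamma)\prop_i\le\int_0^\gamma p'f$ exactly, which combines with Step~1. Your proposed fallback of sharpening Step~2 is therefore unnecessary and points the wrong way. The envelope is already strong enough; only the averaging measure was wrong.
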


\begin{proof}[\cref{thm:rounded-cprop}]
Fix a class $i \in [k]$. For each agent $a \in N$, write
\[
    \widetilde x_a := \sum_{o \in M} \widetilde x_{a,o},
\]
so $\widetilde x_a$ is the total (uncapped) guiding load of agent $a$.

For $\theta \ge 0$, define
\[
    H_i(\theta) := \{a \in N_i : \widetilde x_a \ge \theta\},
    \qquad
    L_i(\theta) := N_i \setminus H_i(\theta),
    \qquad
    f_i(\theta) := |H_i(\theta)|.
\]
Also define
\[
    A_i(\theta)
    := \int_0^\theta f_i(z)\,dz
    = \sum_{a \in N_i} \min\{\widetilde x_a,\theta\}.
\] 

We first lower bound the expected value obtained by class $i$ in the rounded
matching $X$.

By Lemma~\ref{lem:ocs}, each agent $a$ is selected by OCS at least once with
probability at least $p(\widetilde x_a)$. Since any agent selected at least
once is certainly matched by the end of the algorithm, we have
\[
    \EE[V_i(X)]
    \;\ge\;
    \sum_{a \in N_i} p(\widetilde x_a).
\]
Using the layer-cake identity and Fubini's theorem,
\[
    \sum_{a \in N_i} p(\widetilde x_a)
    =
    \sum_{a \in N_i} \int_0^{\widetilde x_a} p'(\theta)\,d\theta
    =
    \int_0^\infty p'(\theta)\, f_i(\theta)\, d\theta.
\]
Therefore,
\begin{equation}
\label{eq:ocs-lower-bound}
    \EE[V_i(X)]
    \;\ge\;
    \int_0^\infty p'(\theta)\, f_i(\theta)\, d\theta.
\end{equation}

It remains to upper bound $\prop_i$ in terms of $f_i(\theta)$ and $A_i(\theta)$,
for $\theta \le \gamma$.

\medskip
\noindent
\textbf{Claim.}
For every $\theta \in [0,\gamma]$,
\begin{equation}
\label{eq:prop-envelope}
    \prop_i \;\le\; A_i(\theta) + f_i(\theta).
\end{equation}

\medskip
\noindent
\emph{Proof of the claim.}
Fix $\theta \in [0,\gamma]$.
Consider an arbitrary fractional matching
$Z = (z_{a,o}) \in [0,1]^{N \times M}$.
For each class $j \in [k]$, let
\[
    \widehat y^j_o := \sum_{a \in N_j} z_{a,o}
    \qquad (o \in M).
\]
Then $\widehat y^j \in [0,1]^M$ and
$\sum_{j=1}^k \widehat y^j_o \le 1$ for every item $o$.
Since $\prop_i$ is defined by maximizing the minimum optimistic value over all
fractional matchings, it is enough to show that
\[
    \min_{j \in [k]} \Vstar{i}{\widehat y^j}
    \;\le\;
    A_i(\theta) + f_i(\theta).
\]

Let
\[
    T_i(\theta)
    :=
    \bigl\{
        o \in M :
        \text{$o$ is liked by some agent in $L_i(\theta)$}
    \bigr\}.
\]
We first show that
\begin{equation}
\label{eq:T-bound}
    |T_i(\theta)| \;\le\; k\,A_i(\theta).
\end{equation}

Fix an item $o \in T_i(\theta)$, and choose an agent
$a \in L_i(\theta)$ such that $(a,o) \in E$.
Since $\widetilde x_a < \theta$ in the \emph{final} guiding process and guiding
loads are monotone over time, agent $a$ has current load $< \theta \le \gamma$
throughout the entire execution.
Hence, whenever item $o$ is processed, class $i$ is active during the
equal-filling phase of EFTT.
Because EFTT splits the item equally among all active classes during this
phase, class $i$ receives at least a $1/k$ fraction of item $o$.

Moreover, within class $i$, EFTT always water-fills among the minimum-load
neighbors of the arriving item.
Since agent $a$ likes $o$ and has current load $< \theta$ at all times, every
infinitesimal piece of $o$ allocated to class $i$ is assigned to an agent of
current load at most that of $a$, and hence of current load $< \theta$.
Therefore the entire fraction of $o$ given to class $i$ contributes to the
quantity $A_i(\theta)$.
Summing over all items in $T_i(\theta)$ gives \eqref{eq:T-bound}.

Now consider the total optimistic value
$\sum_{j=1}^k \Vstar{i}{\widehat y^j}$.
For each $j$, in an optimal fractional matching realizing
$\Vstar{i}{\widehat y^j}$, the contribution of agents in $L_i(\theta)$ can only
come from items in $T_i(\theta)$, because these are exactly the items liked by
at least one agent in $L_i(\theta)$.
Hence the total contribution of all agents in $L_i(\theta)$ across all $k$
bundles is at most
\[
    \sum_{o \in T_i(\theta)} \sum_{j=1}^k \widehat y^j_o
    \;\le\;
    |T_i(\theta)|
    \;\le\;
    k\,A_i(\theta).
\]
On the other hand, there are exactly $f_i(\theta)$ agents in $H_i(\theta)$, and
each such agent can contribute at most $1$ to each of the $k$ optimistic
matchings. Thus the total contribution of agents in $H_i(\theta)$ across all
$k$ bundles is at most
\[
    k\,f_i(\theta).
\]
Combining the low-load and high-load contributions,
\[
    \sum_{j=1}^k \Vstar{i}{\widehat y^j}
    \;\le\;
    k\,A_i(\theta) + k\,f_i(\theta).
\]
Therefore,
\[
    \min_{j \in [k]} \Vstar{i}{\widehat y^j}
    \;\le\;
    \frac{1}{k}\sum_{j=1}^k \Vstar{i}{\widehat y^j}
    \;\le\;
    A_i(\theta) + f_i(\theta).
\]
Since $Z$ was arbitrary, this proves \eqref{eq:prop-envelope}.
\hfill$\diamond$

\medskip

We now integrate the envelope \eqref{eq:prop-envelope} against a suitable
kernel.

Define
\[
    F_\gamma(\theta)
    :=
    e^\theta \int_\theta^\gamma e^{-z} p'(z)\,dz,
    \qquad
    c_\gamma(\theta)
    :=
    -F_\gamma'(\theta)
    =
    p'(\theta) - F_\gamma(\theta)
    \qquad (0 \le \theta \le \gamma).
\]
A direct calculation shows that $p$ is concave on $[0,\infty)$; equivalently,
$p'$ is nonincreasing on $[0,\infty)$.
Therefore, for every $\theta \in [0,\gamma]$,
\[
    F_\gamma(\theta)
    \;\le\;
    p'(\theta)\,e^\theta \int_\theta^\gamma e^{-z}\,dz
    =
    p'(\theta)\bigl(1-e^{\theta-\gamma}\bigr)
    \;\le\;
    p'(\theta),
\]
and hence
\[
    c_\gamma(\theta) \;\ge\; 0.
\]
Also, since $c_\gamma = -F_\gamma'$, we have
\[
    \int_\theta^\gamma c_\gamma(u)\,du
    =
    F_\gamma(\theta) - F_\gamma(\gamma)
    =
    F_\gamma(\theta),
\]
because $F_\gamma(\gamma)=0$.
Consequently,
\begin{equation}
\label{eq:kernel-identity}
    c_\gamma(\theta) + \int_\theta^\gamma c_\gamma(u)\,du
    =
    p'(\theta)
    \qquad
    (0 \le \theta \le \gamma).
\end{equation}

Multiply \eqref{eq:prop-envelope} by $c_\gamma(\theta)$ and integrate over
$\theta \in [0,\gamma]$:
\[
    \prop_i \int_0^\gamma c_\gamma(\theta)\,d\theta
    \;\le\;
    \int_0^\gamma c_\gamma(\theta) A_i(\theta)\,d\theta
    +
    \int_0^\gamma c_\gamma(\theta) f_i(\theta)\,d\theta.
\]
By Fubini's theorem and \eqref{eq:kernel-identity},
\[
\begin{aligned}
    \int_0^\gamma c_\gamma(\theta) A_i(\theta)\,d\theta
    &=
    \int_0^\gamma c_\gamma(\theta)
    \left(\int_0^\theta f_i(z)\,dz\right)\,d\theta \\
    &=
    \int_0^\gamma
    \left(\int_z^\gamma c_\gamma(\theta)\,d\theta\right) f_i(z)\,dz.
\end{aligned}
\]
Hence
\[
\begin{aligned}
    \prop_i \int_0^\gamma c_\gamma(\theta)\,d\theta
    &\le
    \int_0^\gamma
    \left(
        \int_z^\gamma c_\gamma(\theta)\,d\theta
        +
        c_\gamma(z)
    \right)
    f_i(z)\,dz \\
    &=
    \int_0^\gamma p'(z)\,f_i(z)\,dz.
\end{aligned}
\]
Finally,
\[
    \int_0^\gamma c_\gamma(\theta)\,d\theta
    =
    F_\gamma(0)
    =
    \int_0^\gamma e^{-z}p'(z)\,dz
    =
    \rho(\gamma).
\]
Therefore,
\[
    \rho(\gamma)\,\prop_i
    \;\le\;
    \int_0^\gamma p'(z)\,f_i(z)\,dz
    \;\le\;
    \int_0^\infty p'(z)\,f_i(z)\,dz
    \;\le\;
    \EE[V_i(X)],
\]
where the last inequality is \eqref{eq:ocs-lower-bound}.

Since the class $i$ was arbitrary, the above holds for every class
$i \in [k]$, proving that the algorithm is $\rho(\gamma)$-CPROP in
expectation.
\end{proof}

\end{document}